%% file: main.tex
\pdfoutput=1
\documentclass[11pt,letterpaper]{article}

\usepackage{hchang}
\usepackage{fullpage}
\usepackage{aliascnt}
\usepackage{array,booktabs,float,pdflscape}
\usepackage{algpseudocode}
\usepackage[nameinlink,capitalize,noabbrev]{cleveref}
\hypersetup{hypertexnames=false}

\newcommand{\RR}{\mathbb{R}}

\newcommand{\defn}[1]{\textbf{#1}}
\DeclareMathOperator*{\EX}{\mathbb{E}}

\DeclareMathOperator{\avd}{avd}

\newtheorem{theorem}{Theorem}[section]

\newaliascnt{corollary}{theorem}
\newtheorem{corollary}[corollary]{Corollary}
\aliascntresetthe{corollary}

\newaliascnt{definition}{theorem}
\newtheorem{definition}[definition]{Definition}
\aliascntresetthe{definition}
\newaliascnt{algorithm}{theorem}
\usepackage{algorithm}
\aliascntresetthe{algorithm}
\newaliascnt{lemma}{theorem}
\newtheorem{lemma}[lemma]{Lemma}
\aliascntresetthe{lemma}

\newaliascnt{proposition}{theorem}
\newtheorem{proposition}[proposition]{Proposition}
\aliascntresetthe{proposition}
\crefname{proposition}{Proposition}{Propositions}
\Crefname{proposition}{Proposition}{Propositions}

\crefname{figure}{Algorithm}{Algorithms}
\Crefname{figure}{Algorithm}{Algorithms}

\nolinenumbers

\title{Optimal Average Distortion for Non-Expansive Embeddings of Euclidean Metrics into the Line\\
{\large With applications to cut-matching games and vertex expansion}}
\author{Jack Spalding-Jamieson}
\date{}

\begin{document}

\maketitle

\begin{abstract}
We show that any $n$-point Euclidean metric
admits a non-expansive embedding into the
line with average distortion at most $O\left(\sqrt{\frac{\log n}{\log\log n}}\right)$,
matching a known lower bound.

We also present two applications of this result.
First, we give a cut-player strategy with an optimal spectral guarantee
for the undirected cut-matching game.
Second,
we give an improved spectral approximation for vertex expansion in high-degree graphs.
\end{abstract}

\paragraph{AI Disclosure.} Key technical ideas of this work are due to \EMPH{GPT-6 Astra}, with some simplification due to \EMPH{Claude Opus 5.5}.
The presentation here is our own,
and we take full responsibility for the correctness of our results.

\section{Introduction}

For a metric $(X,d)$,
an \defn{embedding into the line}
is a function $f:X\to\RR$.
We say that such an embedding is \defn{non-expansive}\footnote{This is sometimes called a $1$-Lipschitz embedding.}
if, for every pair $x,y\in X$, $|f(x)-f(y)|\leq d(x,y)$.
The \defn{average distortion} of a non-expansive embedding into the line
is defined as
\[
\avd(f)
 :=\frac{\sum_{x,y\in X}d(x,y)}
         {\sum_{x,y\in X}|f(x)-f(y)|}.
\]
More generally, we define the constant
$\avd((X,d)):=\min_{f:X\to\RR\text{ non-expansive}}\avd(f)$.

Numerous works have studied the average distortion of non-expansive embeddings for a variety of metrics.
In particular, they arise in a number of constructions
for approximation algorithms of graph partitioning problems~\cite{arora2009expander,rabinovich2008average,DRSW27,biswal2010eigenvalue}.
For general $n$-point metrics $(X,d)$, it is known that $\avd((X,d))\in O(\log n)$,
and that this is the best-possible bound for some metrics~\cite{biswal2010eigenvalue}.
For numerous shortest-path metrics $(X,d)$ in restricted graph classes,
it is known that $\avd((X,d))\in O(1)$,
where the constant depends on a class parameter~\cite{rabinovich2008average,biswal2010eigenvalue,conroy2025protect,DRSW27}.
For \emph{squared} Euclidean metrics $(X,\|\cdot\|_2^2)$
(where every triple $x,y,z\in X$ satisfies $\|x-y\|_2^2+\|y-z\|_2^2\geq\|x-z\|_2^2$),
it is known that $\avd((X,\|\cdot\|_2^2))=O(\sqrt{\log n})$~\cite{arora2009expander},
and this bound is tight~\cite{matousek2001dominated}.

For Euclidean metrics $(X,\|\cdot\|_2)$,
it has long been known that $\avd((X,\|\cdot\|_2))\in O(\sqrt{\log n})$~\cite{matousek2001dominated},
and that there are examples where
$\avd((X,\|\cdot\|_2))\in\Omega\left(\sqrt\frac{\log n}{\log\log n}\right)$~\cite{matousek2001dominated}.
Our first main result closes this gap.

\begin{theorem}
\label{thm:embed}
For an integer $k\geq1$, let $X\subset\RR^k$ be a set of size $n$.
Then there exists a non-expansive embedding into the line $f:X\to\RR$
so that $\avd(f)\in O\left(\sqrt{\frac{\log n}{\log\log n}}\right)$.
\end{theorem}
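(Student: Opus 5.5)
The plan is to reduce to a separated-sets statement and then do a multiscale analysis; the step that must beat $\sqrt{\log n}$ is the combination across scales in the third paragraph. First I normalize. By the Johnson--Lindenstrauss lemma I may assume $k=O(\log n)$. This map has constant distortion on all pairs, so it changes $\avd$ by only a constant factor once composed with a non-expansive map, rescaled by the JL contraction factor. I will also use the standard fact that $\avd(f)$ is controlled by separated sets. Suppose $A,B\subseteq X$ satisfy $|A|,|B|\geq cn$ and $\|a-b\|_2\geq\delta$ for all $a\in A,b\in B$. Then $f(x):=\min_{a\in A}\|x-a\|_2$ is non-expansive and gives $\sum_{x,y}|f(x)-f(y)|\geq c^2n^2\delta$. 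This is useful only when $\sum_{x,y}\|x-y\|_2=O(n^2 D)$ for a comparable scale $D$. So the target is a structural statement: after suitable localization, one can find linear-size sets separated by $\Omega\bigl(D\sqrt{\log\log n/\log n}\bigr)$.

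The one-scale case is handled via squared distances. Suppose a constant fraction of $X$ lies in a ball of radius $O(D)$, where $D$ is the average distance, and is $\Omega(D)$-spread. Then $\|\cdot\|_2^2$ on these points is a negative-type metric with average $\Omega(D^2)$ and diameter $O(D^2)$. The ARV structure theorem~\cite{arora2009expander} then yields $\Omega(n)$-size sets at squared distance $\Omega(D^2/\sqrt{\log n})$, that is, Euclidean distance $\Omega(D(\log n)^{-1/4})$. This is far better than needed. The difficulty is therefore not the well-clustered case but the case where the mass of $\sum\|x-y\|_2$ is spread over many geometric scales. This is exactly the structure of Matou\v{s}ek's lower-bound examples, and it is also where a random projection loses $\sqrt{\log n}$.

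For the general case I will run a multiscale decomposition. Let $D_j=2^{-j}D$ and let $X_j$ be a maximal $D_j$-separated net, taken with a cluster partition at each scale. Let $w_j$ be the contribution to $\sum\|x-y\|_2$ from pairs whose distance is about $D_j$. The embedding will be a random combination $f=\sum_j \epsilon_j f_j$, where each $f_j$ is the distance-to-a-random-subset map of the previous paragraph inside the scale-$j$ clusters. Each $f_j$ is truncated at $D_j$ and made non-expansive, and the weights satisfy $\sum_j \epsilon_j\leq 1$. Only scales with $w_j$ within a constant factor of the maximum among a block of consecutive scales matter. I will group scales into blocks of about $\log\log n$ consecutive scales, and weight each block by $\epsilon\propto\sqrt{\log\log n/\log n}$.

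The main obstacle is to show that at most $O(\log n/\log\log n)$ blocks can carry essentially all of the $w_j$. The idea is a counting argument: a scale carries substantial weight only if its clusters contain many points, and $n$ points can supply only about $\log n/\log\log n$ scale-blocks of geometric growth by factor $\log n$. The loss is then a Cauchy--Schwarz balance between the number of active blocks and the weight $\epsilon$ per block, giving $\sqrt{\log n/\log\log n}$ rather than $\sqrt{\log n}$. If the clean counting fails, I will first try to replace it with an entropy or potential argument on $\sum_j w_j\log(|X_j|)$. This needs to show that scales with few active clusters are cheap, because ARV gives them separation $(\log|X_j|)^{-1/4}$.
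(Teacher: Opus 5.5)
\textbf{There is a genuine gap, and it is in the step you treat as easy.} You apply the ARV structure theorem to $\|\cdot\|_2^2$. But the squared distances of an arbitrary Euclidean point set do not form a metric: already $0,1,2\in\RR$ violate the triangle inequality. ARV's chaining step (``$j$ hops of squared length $\le\delta$ stay within squared distance $j\delta$'') is exactly that triangle inequality, so the theorem does not apply.

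The claimed conclusion is also false, not just unproved. Let $D$ be the average distance. By Markov's inequality some ball of radius $2D$ contains at least $n/2$ points. So either \cref{lemma:large-core} applies, or these points satisfy your one-scale hypothesis. Your claim plus your own separated-sets reduction would then give $\avd=O((\log n)^{1/4})$ for \emph{every} Euclidean metric, contradicting the $\Omega(\sqrt{\log n/\log\log n})$ lower bound. The extremal configurations are single-scale and well-spread; think of a dense sample of a sphere of dimension $\Theta(\log n/\log\log n)$, where concentration of measure limits every $1$-Lipschitz function.

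So your multiscale machinery targets a non-issue. It is also inconsistent as written. Up to $\log n/\log\log n$ blocks, each weighted $\sqrt{\log\log n/\log n}$, give $\sum_j\epsilon_j\approx\sqrt{\log n/\log\log n}\gg1$, which breaks non-expansiveness. The block-counting claim is also unsupported.

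\textbf{What is missing} is a way to beat random projection in the single-scale, well-spread case. The paper needs no localization. Outside the case of \cref{lemma:large-core}, it picks a Gaussian direction and takes the minimum-weight matching between the $\lceil n/64\rceil$ lowest and highest points. It converts this matching into a non-expansive map by Kantorovich--Rubinstein duality (\cref{lemma:matching-duality}), and lower-bounds the expected matching weight with \cref{lemma:transitions}.

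In that lemma, ARV's worst-case chaining is replaced by Markov type $2$ (\cref{lemma:euclidean-walk}), which holds on average for every Euclidean set. A walk along matching edges for $r=\lceil\log n\rceil$ fresh Gaussian directions satisfies $\EX\|x_i-x_0\|_2^2\le i$. Consider a reverse walk whose Gaussians are shifted by $\pm t(y_i-a)/i$; the probability that it ends at $a$ is at most $1$. Writing that probability as a likelihood-ratio expectation over the forward walk and applying Jensen gives $\frac{t^2}{4}\sum_{i\le r}\frac1i=O(\log n)$. The harmonic sum is where the $\log\log n$ is gained.

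Incidentally, your separated-sets target does hold in the well-spread case, with $D$ the global average distance. Apply \cref{lemma:transitions} to a maximal matching of pairs at distance $<\delta$ between the two extreme sets. If this matching had more than $\frac12\lceil n/64\rceil$ edges for every direction, the lemma would force $\delta=\Omega(D\sqrt{\log\log n/\log n})$. So for smaller $\delta$, the unmatched leftovers are linear-size $\delta$-separated sets. But that route again needs the paper's key lemma rather than ARV or scale counting.
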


\input{applications}

\subsection{Proof Overview}

The proof of
\cref{thm:embed}
will follow a common two-case structure
employed in various forms in several previous works~\cite{arora2009expander,rabinovich2008average,biswal2010eigenvalue,DRSW27}.
Such proofs first eliminate an easy case.
For convenience,
we use the specific form of this easy case stated
by Davies, Rz{\k{a}}{\.z}ewski, Spalding-Jamieson, and Walczak~\cite{DRSW27}.

\begin{lemma}[{\cite[Lemma 4.1]{DRSW27}}]
\label{lemma:large-core}
Let $(X,d)$ be a metric space,
and let $\Delta:=\frac1{n^2}\sum_{x,y\in X}d(x,y)$.
Let $\kappa>0$ and $c>2$.
If there exists some $x^*\in X$ with
$\left|B\left(x^*,\frac\Delta c\right)\right|\geq\kappa n$,
then the embedding $f(x)=d(x,B(x^*,\Delta/c))$
is non-expansive and
has $\avd(f)\leq\frac1{\frac\kappa 2\left(1-\frac2c\right)}$.
\end{lemma}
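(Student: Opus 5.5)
Let $S:=B(x^*,\Delta/c)$ and $f(x):=d(x,S)$. The plan is to check non-expansiveness directly, bound the denominator of $\avd(f)$ from below by pairs with one endpoint in $S$, and bound the numerator by $n^2\Delta$, which holds by definition of $\Delta$.

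Non-expansiveness is the standard fact that distance to a set is $1$-Lipschitz. For any $s\in S$ we have $d(x,s)\le d(x,y)+d(y,s)$. Taking the infimum over $s$ gives $f(x)\le d(x,y)+f(y)$, and symmetry gives $|f(x)-f(y)|\le d(x,y)$.

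For the denominator, note that $f$ vanishes on $S$, so
\[\sum_{x,y}|f(x)-f(y)|\ \ge\ 2\sum_{s\in S}\sum_{y\in X} f(y)\ =\ 2|S|\sum_{y}f(y)\ \ge\ 2\kappa n\sum_y f(y).\]
Here we count ordered pairs with $x\in S$ or $y\in S$. Pairs with both endpoints in $S$ contribute $0$, so the overcounting is harmless. It therefore suffices to lower-bound $\sum_y f(y)$.

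For any $y$ and any $s\in S$, $d(y,x^*)\le d(y,s)+d(s,x^*)\le d(y,s)+\Delta/c$. Hence $f(y)\ge d(y,x^*)-\Delta/c$. By the triangle inequality through $x^*$,
\[n^2\Delta=\sum_{x,y}d(x,y)\le 2n\sum_y d(y,x^*),\]
so $\sum_y d(y,x^*)\ge n\Delta/2$. Therefore
\[\sum_y f(y)\ \ge\ \frac{n\Delta}{2}-\frac{n\Delta}{c}\ =\ \frac{n\Delta}{2}\left(1-\frac2c\right).\]
This is positive precisely because $c>2$, and that is the only delicate point in the argument.

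Combining, the denominator is at least $2\kappa n\cdot\frac{n\Delta}{2}(1-\frac2c)=\kappa n^2\Delta(1-\frac2c)$. This gives
\[\avd(f)\le \frac{1}{\kappa(1-\frac2c)},\]
which is stronger than the claimed $\frac{1}{\frac\kappa2(1-\frac2c)}$. If the intended sum runs only over unordered pairs, or if one counts only the pairs with $x\in S$ without the factor $2$, we lose a factor of $2$ and land exactly on the stated bound.
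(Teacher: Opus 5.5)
The paper does not prove this lemma itself; it is quoted from \cite[Lemma~4.1]{DRSW27}, so there is no in-paper argument to compare against. Your proof is correct and is the standard argument for this easy case. It uses three facts. First, $d(\cdot,S)$ is $1$-Lipschitz. Second, the denominator is at least the contribution of the ordered pairs with an endpoint in $S$. Third, $\sum_y d(y,S)\geq\sum_y d(y,x^*)-n\Delta/c\geq n\Delta\left(\frac12-\frac1c\right)$, by the triangle inequality through $x^*$.

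The stated constant $\frac{1}{\frac\kappa2\left(1-\frac2c\right)}$ is what you get by keeping only the ordered pairs with first coordinate in $S$. Your inclusion--exclusion over both orientations is valid under the paper's ordered-pair conventions, because pairs inside $S$ contribute $0$. It gains the factor $2$, so your bound $\frac{1}{\kappa(1-2/c)}$ implies the statement.

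One small correction to your closing remark: the ratio defining $\avd(f)$ is the same whether both sums run over ordered or unordered pairs, since numerator and denominator halve together. So that change of convention does not cost a factor of $2$; only the one-sided counting explains the weaker constant.
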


In our application of the first lemma,
we will
have $\kappa,c$ be universal constants.
The case when the first lemma does not apply (for constant values $\kappa,c$)
is called the \defn{well-spread} case.
In the well-spread case,
we will instead compute a random projection,
and use it to define a minimum-weight matching problem.
Then, by a standard result from optimal transport,
the dual to this matching problem will give us our
embedding.
The bulk of our work is then to bound
the expected total weight
of the resulting matching.

\section{A Matching Lemma}

The key lemma to our proof is as follows.
\begin{lemma}
\label{lemma:transitions}
Let $X\subset\RR^k$ have size $n$.
For each $g\in\RR^k$, let $K_g$ be a symmetric matrix on $X$
with nonnegative entries and row sums equal to $1$.
Sample $x_0\sim\operatorname{Unif}(X)$ and
$\gamma\sim\mathcal N(0,1)^k$ independently,
then sample $x_1$ from $K_\gamma(x_0,\cdot)$.
Then
\[
\left(\EX|\langle\gamma,x_1-x_0\rangle|\right)^2
=O\left(\EX\|x_1-x_0\|_2^2
\frac{\log n}{\log\log n}\right).
\]
\end{lemma}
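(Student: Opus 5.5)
The plan is to split $|\langle\gamma,x_1-x_0\rangle|$ according to how atypically large the projection is, relative to a threshold $t:=C\sqrt{\log n/\log\log n}$. For a pair $x\neq y$ write $Z_{xy}:=\langle\gamma,x-y\rangle/\|x-y\|_2$; each $Z_{xy}$ is a standard Gaussian, but the $Z_{xy}$ are correlated and $K_\gamma$ may be chosen adversarially as a function of $\gamma$. Since $K_\gamma$ is symmetric with unit row sums, it is doubly stochastic. Hence the pair $(x_0,x_1)$ is exchangeable with uniform marginals, and
\[
\EX|\langle\gamma,x_1-x_0\rangle|=\frac1n\,\EX_\gamma\sum_{x,y}K_\gamma(x,y)\,\|x-y\|_2\,|Z_{xy}|.
\]
On the event $|Z_{x_0x_1}|\le t$, Cauchy--Schwarz gives a contribution of at most $t\,\EX\|x_1-x_0\|_2\le t\,(\EX\|x_1-x_0\|_2^2)^{1/2}$. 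This is exactly the claimed bound after squaring. So everything reduces to showing that the tail part $\EX\big[|\langle\gamma,x_1-x_0\rangle|\,\mathbf 1\{|Z_{x_0x_1}|>t\}\big]$ is also $O(t\,(\EX\|x_1-x_0\|^2)^{1/2})$.

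For the tail I would decompose dyadically. Let $s_j:=2^jt$ and consider the levels $|Z_{xy}|\in(s_j,s_{j+1}]$. In each level I would bound the $K_\gamma$-mass placed on such pairs, weighted by $\|x-y\|$. A naive row-by-row Hölder argument gives
\[
\sum_y K(x,y)Z_{xy}^2\le\Big(\sum_y|Z_{xy}|^{2p}\Big)^{1/p},
\]
using $\sum_yK^q\le1$. This yields only $O(\sqrt{\log n})$ and so must be improved. The improvement should come from two features of the problem. First, the matrix is doubly stochastic, so each point can send and receive only mass $1$. Second, the vectors $x-y$ are differences of a common point set, so for fixed $x$ the values $\langle\gamma,x-y\rangle=\langle\gamma,x\rangle-\langle\gamma,y\rangle$ are governed by only the $n$ scalars $\langle\gamma,y\rangle$, not by $n^2$ independent Gaussians.

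Concretely, I would sort the points by $g(y):=\langle\gamma,y\rangle$. A large $Z_{xy}$ with small $\|x-y\|$ then means that $y$ lies far from $x$ in the $g$-order while being close in $\RR^k$. I would then control, for each scale $r$, the number of points $y$ in the ball $B(x,r)$ whose projection deviates from $g(x)$ by more than $s\,r$. Its expectation is at most $|B(x,r)|\,e^{-s^2/2}$. Summing over dyadic scales of $r$ and over levels $s$, and using double stochasticity to charge mass against point counts, should give a bound of the shape
\[
\sum_j s_{j+1}\cdot\min\bigl(1,\;n\,e^{-s_j^2/2}\cdot(\text{scale factor})\bigr).
\]
The $\log\log n$ gain is meant to come from optimizing the Hölder exponent $p\approx\log n/\log\log n$ against the factor $n^{1/p}=(\log n)^{O(1)}$. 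Inside the square root this factor is absorbed because Gaussian moments grow only like $p^{p/2}$.

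The main obstacle is the tail step: turning the correlated, adaptively selected Gaussian family into a bound that beats the trivial $\sqrt{\log n}$ union bound by exactly $\sqrt{\log\log n}$. My first attempt would be the high-moment Hölder argument with $p=\log n/\log\log n$, applied to the whole double sum $\sum_{x,y}K(x,y)\|x-y\|\,|Z_{xy}|$. In that argument I would exploit symmetry to pair the row and column constraints, and I would bound $\EX\sum_{x,y}|Z_{xy}|^{p}\,w_{xy}$ with weights $w_{xy}$ depending only on distances. If this loses a logarithm, the fallback is a chaining-style decomposition of $X$ into dyadic distance scales, with the scale-by-scale counting described above.
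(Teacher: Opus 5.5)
There is a genuine gap, and it sits exactly where you locate the difficulty. The threshold split is only a reformulation. The bulk term is trivially at most your threshold times $\sqrt{\EX\|x_1-x_0\|_2^2}$, and the tail term is at most the whole left-hand side. So your tail estimate is equivalent to the lemma itself, and for it you give only heuristics.

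Those heuristics cannot work as stated. At $p=\log n/\log\log n$ one has $n^{1/p}=\log n$. This factor multiplies the Gaussian moment factor ($\sqrt p$, or $p$ under the square root) instead of being absorbed. Optimizing $p$ just returns $p\approx\log n$ and the union bound $\sqrt{\log n}$.

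More fundamentally, your two routes only use row constraints on $K_\gamma$:
\begin{itemize}
\item In the H\"older route, once you separate $K_\gamma$ from the Gaussians, the only surviving information about $K_\gamma$ is a bound like $\sum_{x,y}K_\gamma(x,y)^q\le n$. That bound holds for every row-stochastic kernel.
\item In the ball counting, a row may put all its mass on a single bad partner, so counting only controls whether one exists. The bound $\min(1,|B(x,r)|e^{-s^2/2})$ stays $\Omega(1)$ until $s\approx\sqrt{2\log|B(x,r)|}$.
\end{itemize}
But the lemma is false for merely row-stochastic kernels. Take the vertices $e_1,\dots,e_n$ of a regular simplex and send every point to the maximizer of $y\mapsto\langle\gamma,y\rangle$. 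This gives $(\EX|\langle\gamma,x_1-x_0\rangle|)^2\approx2\log n$ against $\EX\|x_1-x_0\|_2^2\approx2$, whereas symmetric kernels on the same set achieve only $O(1)$. So the entire improvement over $\sqrt{\log n}$ must come from the symmetry of $K_\gamma$ acting globally. You invoke that only as a slogan (``charge mass against point counts'', ``pair the row and column constraints'').

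The idea you are missing, which the paper uses, is to iterate the kernel so that symmetry becomes reversibility.
\begin{enumerate}
\item Normalize $\EX\|x_1-x_0\|_2^2=1$ and set $\tau:=\EX|\langle\gamma,x_1-x_0\rangle|$.
\item Run $r=\lceil\log n\rceil$ steps with fresh Gaussians $\gamma_i$. The averaged chain is symmetric, hence reversible with uniform stationary law. Markov type~$2$ of Hilbert space (Lemma~\ref{lemma:euclidean-walk}) then gives $\EX\|x_i-x_0\|_2^2\le i$.
\item Run a reverse walk from a uniform $y_r$, using Gaussians shifted by $\pm\tau(y_i-a)/i$. Its probability of ending at $a$ is at most $1$. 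Because $K_g$ is symmetric, this probability equals the expectation, over the forward walk from $x_0=a$, of $\frac1n$ times a product of Gaussian likelihood ratios.
\item Jensen's inequality and $\cosh u\ge e^{|u|}/2$ give
\[
\sum_{i=1}^r\Bigl[\frac{\tau}{i}\EX|\langle\gamma_i,x_i-x_0\rangle|-\frac{\tau^2}{2i^2}\EX\|x_i-x_0\|_2^2\Bigr]\le\log n+r\log 2.
\]
\item The triangle inequality, together with $\EX|\langle\gamma_i,x_i-x_{i-1}\rangle|=\tau$, $\EX|\langle\gamma_i,x_{i-1}-x_0\rangle|\le\sqrt i$ and $\EX\|x_i-x_0\|_2^2\le i$, bounds the left side below by $\sum_{i\le r}\bigl(\tau^2/(4i)-1\bigr)$.
\end{enumerate}
The $\log\log n$ is the harmonic sum $\sum_{i\le\log n}1/i$ over the time scales of the walk, not the output of a moment optimization.
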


\begin{corollary}
\label{cor:separation}
Let $X\subset\RR^k$ be a set of $n$ points,
and let $\Delta:=\frac1{n^2}\sum_{x,y\in X}\|x-y\|_2$.
Assume $|B(x,\Delta/4)|<\frac n{64}$ for every $x\in X$.
For a nonzero vector $g\in\RR^k$,
let $X^-_g,X^+_g\subset X$ be the
$\left\lceil \frac n{64}\right\rceil$
smallest and largest points
in the ordering given by $x\mapsto\langle x,g\rangle$,
respectively.
Let $A_g$ be a minimum-weight bipartite
perfect matching between $X^-_g$ and $X^+_g$,
with weights given by the Euclidean distance between points,
and let $|A_g|$ denote its total weight.
Then,
$\EX_{\gamma\sim\mathcal N(0,1)^k}|A_\gamma|\geq c\cdot n\cdot\Delta\sqrt{\frac{\log\log n}{\log n}}$
for a universal constant $c>0$. %
\end{corollary}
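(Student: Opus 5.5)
The plan is to apply \cref{lemma:transitions} with $K_g$ built from the matching $A_g$, so that $x_1$ is the partner of $x_0$. We lower bound $\EX|\langle\gamma,x_1-x_0\rangle|$ by roughly $\Delta$ using the well-spread assumption, and upper bound $\EX\|x_1-x_0\|_2^2$ by roughly $\Delta\,\EX|A_\gamma|/n$. Let $m=\lceil n/64\rceil$, and fix a large constant $L$ and a small constant $\varepsilon$ to be chosen. For each $g$, call an edge $\{x,y\}\in A_g$ \emph{short} if $\|x-y\|_2\le L\Delta$. Define $K_g(x,y)=K_g(y,x)=1$ for each short edge, and $K_g(z,z)=1$ for every $z$ not covered by a short edge. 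This matrix is symmetric, nonnegative, and has unit row sums.

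\textbf{Step 1: the quantile gap is of order $\Delta$ with constant probability.} Let $\mathrm{gap}_g$ be the gap between the largest projection $\langle x,g\rangle$ over $X^-_g$ and the smallest over $X^+_g$. Every matched pair in $A_g$ then has $|\langle g,x-y\rangle|\ge\mathrm{gap}_g$. If $\mathrm{gap}_\gamma<\varepsilon\Delta$, the at least $(62/64)n$ points outside $X^\pm_\gamma$ project into an interval of length $<\varepsilon\Delta$. Hence at least a $(62/64)^2$ fraction of ordered pairs $(x,y)$ satisfy $|\langle\gamma,x-y\rangle|<\varepsilon\Delta$. On the other hand, consider the expected fraction of such pairs. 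By well-spreadness, all but a $1/64$ fraction of pairs have $\|x-y\|_2\ge\Delta/4$. For each of these, $\langle\gamma,x-y\rangle$ is Gaussian with standard deviation $\ge\Delta/4$, so it lands in $(-\varepsilon\Delta,\varepsilon\Delta)$ with probability $O(\varepsilon)$. The expected fraction is therefore at most $1/64+O(\varepsilon)$. By Markov's inequality, $\Pr[\mathrm{gap}_\gamma<\varepsilon\Delta]\le(1/64+O(\varepsilon))/(62/64)^2<1/2$ for $\varepsilon$ small enough.

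\textbf{Step 2: dealing with long edges.} This is the main obstacle. The lemma controls squared lengths, whereas $|A_g|$ sums lengths, so long edges must be excluded. They also cannot simply be bounded, since a single far outlier could have a huge projection. The number of long edges is at most $|A_g|/(L\Delta)$. Let $E$ be the event $|A_\gamma|\le mL\Delta/2$. If $\Pr[E^c]\ge1/4$, then $\EX|A_\gamma|\ge mL\Delta/8=\Omega(n\Delta)$ and we are done. Otherwise $\Pr[E\cap\{\mathrm{gap}_\gamma\ge\varepsilon\Delta\}]\ge1/4$. On this event at least $m/2$ short edges each have projection $\ge\varepsilon\Delta$. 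Thus
\[
\EX|\langle\gamma,x_1-x_0\rangle|=\frac2n\EX\sum_{\text{short }\{x,y\}}|\langle\gamma,x-y\rangle|\ge\frac2n\cdot\frac14\cdot\frac m2\cdot\varepsilon\Delta=\Omega(\Delta).
\]

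\textbf{Step 3: conclude.} Each short edge satisfies $\|x-y\|_2^2\le L\Delta\|x-y\|_2$, so
\[
\EX\|x_1-x_0\|_2^2=\frac2n\EX\sum_{\text{short}}\|x-y\|_2^2\le\frac{2L\Delta}{n}\EX|A_\gamma|.
\]
Plugging Steps 2 and 3 into \cref{lemma:transitions} gives
\[
\Omega(\Delta^2)\le O\!\left(\frac{\Delta\,\EX|A_\gamma|}{n}\cdot\frac{\log n}{\log\log n}\right),
\]
which rearranges to $\EX|A_\gamma|\ge c\,n\Delta\sqrt{\log\log n/\log n}$. The bound from the other case of Step 2 is stronger, so the corollary holds in both cases.
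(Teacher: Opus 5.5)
\paragraph{There is a genuine gap in the final rearrangement.} Steps 1 and 2 are fine. The problem is the last line of Step 3. From
\[
\Omega(\Delta^2)\le O\!\left(\frac{\Delta\,\EX|A_\gamma|}{n}\cdot\frac{\log n}{\log\log n}\right),
\]
dividing by $\Delta$ gives only $\EX|A_\gamma|\ge c\,n\Delta\,\frac{\log\log n}{\log n}$, not $c\,n\Delta\sqrt{\log\log n/\log n}$. You lose a factor of $\sqrt{\log n/\log\log n}$, which is exactly the improvement the corollary is about.

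The cause is the normalization in Step 3. The bound $\|x-y\|_2^2\le L\Delta\,\|x-y\|_2$ is wasteful in the only interesting regime, where the typical matched edge length $\bar\ell:=\EX|A_\gamma|/m$ is much smaller than $\Delta$. There the true second moment is of order $\bar\ell^2$, not $\Delta\bar\ell$. Truncating at scale $\Delta$ is the wrong choice.

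\paragraph{The fix.} Truncate at $L\bar\ell$ instead of $L\Delta$. This is still a deterministic function of $g$, as \cref{lemma:transitions} requires.
\begin{itemize}
\item By Markov's inequality, $|A_\gamma|\le 4m\bar\ell$ with probability at least $3/4$. On that event there are at most $4m/L\le m/2$ long edges once $L\ge 8$.
\item Intersecting with your Step 1 event still leaves probability at least $1/4$. So $\EX|\langle\gamma,x_1-x_0\rangle|=\Omega(\Delta)$ exactly as before, and the separate case $E^c$ is no longer needed.
\item Now $\EX\|x_1-x_0\|_2^2\le\frac{2L\bar\ell}{n}\EX|A_\gamma|=O(\bar\ell^2)$.
\item \cref{lemma:transitions} then gives $\Delta^2=O(\bar\ell^2\log n/\log\log n)$. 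Hence $\EX|A_\gamma|=m\bar\ell=\Omega\bigl(n\Delta\sqrt{\log\log n/\log n}\bigr)$.
\end{itemize}

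This is essentially the paper's argument. The paper rescales so that $\bar\ell=1$ and puts soft weights $K_g(x,y)=1/(1+\|x-y\|_2)$ on matched edges, which makes the second moment at most $1$. It recovers a first moment of order $\Delta$ via Cauchy--Schwarz, $\EX\frac{a}{1+\ell}\ge\frac{(\EX\sqrt{a})^2}{\EX(1+\ell)}$, where $a=|\langle\gamma,u-v\rangle|$ and $\ell=\|u-v\|_2$.
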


To prove \cref{lemma:transitions}, we will make use of Markov chains.
In particular, we will use
the following result.

\begin{lemma}[{\cite[Theorem~8.12]{ostrovskii2013metric}}]
\label{lemma:euclidean-walk}
Let $x_0,x_1,\ldots$ be a reversible Markov chain on $X\subset\RR^k$,
with $x_0$ sampled from a stationary distribution of the chain.
For every integer $i\geq1$,
\[
\EX\|x_i-x_0\|_2^2
\leq i\EX\|x_1-x_0\|_2^2.
\]
\end{lemma}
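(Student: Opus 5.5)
The plan is to reduce to one coordinate and then use the spectral theorem for the transition operator, which is self-adjoint because the chain is reversible. I assume the chain is time-homogeneous with transition matrix $P$ and is reversible with respect to the stationary distribution $\pi$ of $x_0$, meaning $\pi(x)P(x,y)=\pi(y)P(y,x)$. Since $\|x_i-x_0\|_2^2=\sum_{j=1}^k (x_i^{(j)}-x_0^{(j)})^2$ and both sides of the claimed inequality are linear in this sum, it suffices to prove, for every function $f:X\to\RR$,
\[
\EX\bigl(f(x_i)-f(x_0)\bigr)^2\leq i\,\EX\bigl(f(x_1)-f(x_0)\bigr)^2 .
\]
Applying this to each coordinate function and summing gives the lemma.

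\emph{Rewriting both sides as quadratic forms.} Work in $L^2(\pi)$ with inner product $\langle f,g\rangle_\pi=\sum_x \pi(x)f(x)g(x)$. Because $x_0\sim\pi$ and $\pi$ is stationary, every $x_t$ is also distributed as $\pi$, and the pair $(x_0,x_i)$ has joint law $\pi(x)P^i(x,y)$. Expanding the square then gives
\[
\EX\bigl(f(x_i)-f(x_0)\bigr)^2 = 2\langle f,f\rangle_\pi-2\langle f,P^if\rangle_\pi = 2\langle f,(I-P^i)f\rangle_\pi .
\]
The same identity with $i=1$ covers the right-hand side. Reversibility says precisely that $P$ is self-adjoint on $L^2(\pi)$. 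Since $P$ is stochastic, it is also a contraction there, so its spectrum lies in $[-1,1]$.

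\emph{Spectral comparison.} Take an orthonormal eigenbasis $\{\phi_\ell\}$ of $P$ in $L^2(\pi)$ with eigenvalues $\lambda_\ell\in[-1,1]$, and expand $f=\sum_\ell a_\ell\phi_\ell$. Then
\[
\langle f,(I-P^i)f\rangle_\pi=\sum_\ell a_\ell^2(1-\lambda_\ell^i),
\qquad
\langle f,(I-P)f\rangle_\pi=\sum_\ell a_\ell^2(1-\lambda_\ell).
\]
So it is enough to show the scalar inequality $1-\lambda^i\leq i(1-\lambda)$ for all $\lambda\in[-1,1]$. This follows from the factorization $1-\lambda^i=(1-\lambda)(1+\lambda+\dots+\lambda^{i-1})$. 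Here $1-\lambda\geq 0$, and since $|\lambda|\leq 1$ each of the $i$ terms in the second factor is at most $1$, so the second factor is at most $i$.

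The only step needing care is the justification of self-adjointness and of the spectrum lying in $[-1,1]$. Self-adjointness is immediate from detailed balance. The bound $\|Pf\|_\pi\leq\|f\|_\pi$ follows from Jensen's inequality together with stationarity. If $\pi$ gives zero mass to some points, restrict to the support of $\pi$, since $x_0$ and hence every $x_t$ almost surely lies there. Everything else is routine.
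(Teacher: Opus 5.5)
Your proof is correct and is the standard spectral argument (Ball's proof that Hilbert space has Markov type $2$ with constant $1$) behind the cited result: reduce to coordinates, write $\EX(f(x_i)-f(x_0))^2=2\langle f,(I-P^i)f\rangle_\pi$, and use self-adjointness together with $1-\lambda^i\leq i(1-\lambda)$ on $[-1,1]$. The paper gives no proof of its own, only the citation to Ostrovskii, and your assumptions of time-homogeneity and finite $X$ match how the lemma is used: there the $x$-chain, with fresh independent $\gamma_i$, is time-homogeneous with symmetric transition matrix $\EX_\gamma K_\gamma$.
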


At a high level,
the proof of \cref{lemma:transitions} samples a number of different Gaussian variables,
and chains together the resulting matchings to construct a random walk (in the form of a reversible Markov chain).
This on its own is not too dissimilar from some proofs of the $O(\sqrt{\log n})$ bound
for \emph{squared} Euclidean metrics.
In order to obtain our desired bound,
we want to show that the expected total matching weight is large.
We can apply \cref{lemma:euclidean-walk} to
bound the expected squared distance from the starting vertex
at the end of the random walk
using the expected total matching weight.
With the forward walk's starting vertex fixed,
we also generate a second ``reverse'' walk
that attempts to start from a uniformly random vertex,
and tries to arrive at the starting point of the ``forward'' walk.
Unlike the forward walk, the reverse walk uses shifted Gaussians.
An analysis of the success probability of the reverse walk
then eventually leads to our bound.

We defer both proofs to
\cref{sec:lemma-proof}.

\section{Main Embedding Result}

To prove our main embedding result,
we use the following special case
of the Kantorovich-Rubinstein dual.

\begin{lemma}[{\cite[Proposition~2.1 and Section~6.1]{gabriel2019computational}}]
\label{lemma:matching-duality}
Let $(X,d)$ be a metric space of size $n$,
and let $A,B\subset X$ be disjoint sets of equal size.
If $M$ is a minimum-weight perfect matching between $A$ and $B$,
with edge weights $d(x,y)$,
then there exists a non-expansive $f:X\to\RR$ such that
\[
\sum_{y\in B}f(y)-\sum_{x\in A}f(x)
=\sum_{(x,y)\in M}d(x,y).
\]
\end{lemma}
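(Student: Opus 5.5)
The plan is to prove the statement by linear programming duality for the assignment problem, and then repair the dual solution into a globally non-expansive function by an infimal-convolution (McShane-type) extension.

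First, the easy direction. For \emph{any} non-expansive $h:X\to\RR$, pairing the terms along $M$ gives
\[
\sum_{y\in B}h(y)-\sum_{x\in A}h(x)=\sum_{(x,y)\in M}\bigl(h(y)-h(x)\bigr)\leq\sum_{(x,y)\in M}d(x,y).
\]
So it suffices to exhibit a non-expansive $h$ for which the left-hand side is at least $\sum_{(x,y)\in M}d(x,y)$.

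Next, the LP. Consider the assignment LP: minimize $\sum_{x\in A,y\in B}d(x,y)z_{xy}$ subject to $\sum_{y}z_{xy}=1$ for each $x\in A$, $\sum_x z_{xy}=1$ for each $y\in B$, and $z\geq0$.
\begin{itemize}
\item By the Birkhoff--von Neumann theorem, the feasible region is the perfect matching polytope, whose vertices are integral. Hence the LP optimum equals the weight of the minimum-weight perfect matching $M$.
\item The LP dual is to maximize $\sum_{x\in A}\alpha_x+\sum_{y\in B}\beta_y$ subject to $\alpha_x+\beta_y\leq d(x,y)$ for all $x\in A$, $y\in B$.
\item By strong duality, there are optimal $\alpha,\beta$ with $\sum_x\alpha_x+\sum_y\beta_y=\sum_{(x,y)\in M}d(x,y)$.
\end{itemize}
Setting $f=-\alpha$ on $A$ and $f=\beta$ on $B$ yields $f(y)-f(x)\leq d(x,y)$ across the bipartition. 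This is not yet non-expansive on all of $X$, which is the main obstacle.

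To fix this, define for every $z\in X$
\[
h(z):=\min_{x\in A}\bigl(-\alpha_x+d(x,z)\bigr).
\]
We then check three facts.
\begin{itemize}
\item $h$ is non-expansive on $X$, since it is a minimum of $1$-Lipschitz functions $z\mapsto -\alpha_x+d(x,z)$.
\item For $x\in A$, taking the term indexed by $x$ itself gives $h(x)\leq-\alpha_x$.
\item For $y\in B$, the dual constraint gives $-\alpha_x+d(x,y)\geq\beta_y$ for every $x\in A$, so $h(y)\geq\beta_y$.
\end{itemize}
Therefore
\[
\sum_{y\in B}h(y)-\sum_{x\in A}h(x)\geq\sum_y\beta_y+\sum_x\alpha_x=\sum_{(x,y)\in M}d(x,y).
\]
Combined with the easy direction, this is an equality, so $h$ is the desired embedding $f$.
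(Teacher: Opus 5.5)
Your proof is correct and is essentially the argument behind the paper's citation (the paper gives no proof of its own). Proposition~2.1 of Peyr\'e--Cuturi is your Birkhoff--von Neumann reduction of the assignment problem to the transport LP, and their Section~6.1 turns an optimal dual pair into a $1$-Lipschitz potential via the $c$-transform, which for $c=d$ is exactly your extension $h(z)=\min_{x\in A}\bigl(d(x,z)-\alpha_x\bigr)$; the only unstated detail, that $A\neq\emptyset$ so $h$ is defined, is harmless because both sides vanish when $A=B=\emptyset$.
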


\begin{proofof}{\cref{thm:embed}}
Let $\Delta:=\frac1{n^2}\sum_{x,y\in X}\|x-y\|_2$.
If some ball $B(x,\Delta/4)$ contains at least $n/64$ points,
\cref{lemma:large-core} with $\kappa=1/64$ and $c=4$
gives average distortion at most $256$.
Otherwise, \cref{cor:separation} gives a direction $g$ with
\[
|A_g|\geq cn\Delta\sqrt{\frac{\log\log n}{\log n}}.
\]
By \cref{lemma:matching-duality},
there is a non-expansive $f:X\to\RR$ with
\[
|A_g|=\sum_{y\in X_g^+}f(y)-\sum_{x\in X_g^-}f(x).
\]
Consequently, for each $z\in X$, we have
\[
\begin{aligned}
|A_g|
&=\sum_{y\in X_g^+}(f(y)-f(z))
  +\sum_{x\in X_g^-}(f(z)-f(x))\\
&\leq\sum_{x\in X}|f(x)-f(z)|.
\end{aligned}
\]
Finally,
\[
\avd(f)
=\frac{n^2\Delta}{\sum_{x,z\in X}|f(x)-f(z)|}
\leq\frac{n\Delta}{|A_g|}
\leq\frac1c\sqrt{\frac{\log n}{\log\log n}}. %
\]
\end{proofof}

\input{cutmatch-undirected-ltw}

\input{vertex-expansion}

\bibliographystyle{alpha}
\bibliography{ref}

\appendix

\section{Lemma Proof}
\label{sec:lemma-proof}

\begin{proofof}{\cref{cor:separation}}
We may assume that $n>64$,
and that
$X^-_\gamma$ and $X^+_\gamma$ are always disjoint.

Rescale $X$ so that
$\EX\limits_{\gamma\sim\mathcal N(0,1)^k}|A_\gamma|
=\left\lceil\frac n{64}\right\rceil$.

Sample $\gamma\sim\mathcal N(0,1)^k$,
then sample $uv\sim\operatorname{Unif}(A_\gamma)$.
For each $g\in\RR^k\setminus\{\mathbf 0\}$, define a symmetric transition matrix $K_g$ by
\[
K_g(x,y):=\frac{\mathbf1_{\{xy\in A_g\}}}{1+\|x-y\|_2}
\quad(x\ne y),\qquad
K_g(x,x):=1-\sum_{y\ne x}K_g(x,y).
\]
The diagonal entries are nonnegative because each vertex has at most one partner.

Independently sample $x,y\sim\operatorname{Unif}(X)$.
$X^-_\gamma$ and $X^+_\gamma$ each contain $\lceil n/64\rceil$ points.
If neither $x$ nor $y$ belongs to $X^-_\gamma\cup X^+_\gamma$,
then
$|\langle\gamma,x-y\rangle|\leq|\langle\gamma,u-v\rangle|$.
With these distributions we obtain
\[
\begin{alignedat}{2}
\Pr\left(|\langle\gamma,u-v\rangle|\geq\frac\Delta8\right)
&\geq\Pr
\left(|\langle\gamma,x-y\rangle|\geq\frac\Delta8\right)
-2\cdot\frac{2\lceil n/64\rceil}{n}
&&\quad\text{by the union bound for $x,y$}\\
&\geq\frac12\Pr
\left(\|x-y\|_2\geq\frac\Delta4\right)-\frac18
&&\quad\text{since the $\mathcal N(0,1)$ density is $<1/2$}\\
&\geq\frac12\left(1-\frac1{64}\right)-\frac18\geq\frac14
&&\quad\text{by the well-spread assumption}.
\end{alignedat}
\]
Let
\begin{equation}\label{eq:matching-progress}
\begin{alignedat}{2}
t&:=\frac2n\EX\limits_{\gamma\sim\mathcal N(0,1)^k}
\sum_{uv\in A_\gamma}
\frac{|\langle\gamma,u-v\rangle|}{1+\|u-v\|_2}\\
&\geq\frac{2\lceil n/64\rceil}{n}
\frac{\left(\EX\limits_{\substack{\gamma\sim\mathcal N(0,1)^k\\ %
uv\sim\operatorname{Unif}(A_\gamma)}}
\sqrt{|\langle\gamma,u-v\rangle|}\right)^2}
{\EX\limits_{\substack{\gamma\sim\mathcal N(0,1)^k\\
uv\sim\operatorname{Unif}(A_\gamma)}}(1+\|u-v\|_2)}
&&\quad\text{by Cauchy-Schwarz}\\
&=\frac{2\lceil n/64\rceil}{n}
\frac{\left(\EX\limits_{\substack{\gamma\sim\mathcal N(0,1)^k\\
uv\sim\operatorname{Unif}(A_\gamma)}}
\sqrt{|\langle\gamma,u-v\rangle|}\right)^2}{2}\\
&\geq\frac{\lceil n/64\rceil}{n}\frac\Delta8
\Pr\left(|\langle\gamma,u-v\rangle|\geq\Delta/8\right)^2\\
&\geq\frac1{64}\frac\Delta8(1/4)^2
=\frac\Delta{8192}.
\end{alignedat}
\end{equation}
Also,
\begin{equation}\label{eq:step-length}
\frac2n\EX\limits_{\gamma\sim\mathcal N(0,1)^k}
\sum_{uv\in A_\gamma}\frac{\|u-v\|_2^2}{1+\|u-v\|_2}
\leq\frac2n\EX\limits_{\gamma\sim\mathcal N(0,1)^k}|A_\gamma|
=
\frac{2\lceil n/64\rceil}{n}\leq1.
\end{equation}

Independently sample $x_0\sim\operatorname{Unif}(X)$,
then sample $x_1$ from $K_\gamma(x_0,\cdot)$.
For this sampling,
\begin{alignat}{2}
\EX|\langle\gamma,x_1-x_0\rangle|
&=\frac1n\EX\limits_{\gamma\sim\mathcal N(0,1)^k}
\sum_{x,y\in X}K_\gamma(x,y)|\langle\gamma,y-x\rangle|\notag\\
&=\frac2n\EX\limits_{\gamma\sim\mathcal N(0,1)^k}
\sum_{uv\in A_\gamma}
\frac{|\langle\gamma,u-v\rangle|}{1+\|u-v\|_2}=t
&&
\label{eq:sampled-progress}\\
\EX\|x_1-x_0\|_2^2
&=\frac1n\EX\limits_{\gamma\sim\mathcal N(0,1)^k}
\sum_{x,y\in X}K_\gamma(x,y)\|y-x\|_2^2\notag\\
&=\frac2n\EX\limits_{\gamma\sim\mathcal N(0,1)^k}
\sum_{uv\in A_\gamma}\frac{\|u-v\|_2^2}{1+\|u-v\|_2}
&&
\notag\\
&\leq1
&&\quad\text{by \eqref{eq:step-length}}.\label{eq:sampled-step-length}
\end{alignat}
Thus,
\[
\begin{alignedat}{2}
t^2
&=\left(\EX|\langle\gamma,x_1-x_0\rangle|\right)^2
&&\quad\text{by \eqref{eq:sampled-progress}}\\
&=O\left(\EX\|x_1-x_0\|_2^2
\frac{\log n}{\log\log n}\right)
&&\quad\text{by \cref{lemma:transitions} applied to $K_g$}\\
&=O\left(\frac{\log n}{\log\log n}\right)
&&\quad\text{by \eqref{eq:sampled-step-length}}.
\end{alignedat}
\]
Finally,
\[
\begin{alignedat}{2}
\frac{\EX_{\gamma\sim\mathcal N(0,1)^k}|A_\gamma|}{n\Delta}
&=
\frac{\lceil n/64\rceil}{n\Delta}
\geq\frac1{64\Delta}\\
&\geq\frac1{64\cdot8192t}
&&\quad\text{by \eqref{eq:matching-progress}}\\
&=\Omega\left(\sqrt{\frac{\log\log n}{\log n}}\right).
\end{alignedat}
\]
\end{proofof}

\begin{proofof}{\cref{lemma:transitions}}
Rescale $X$ so that $\EX\|x_1-x_0\|_2^2=1$,
and let $t:=\EX|\langle\gamma,x_1-x_0\rangle|$.
Denote $r:=\lceil\log n\rceil$.
Sample $x_0$ uniformly from $X$ and independently sample vectors
$\gamma_1,\ldots,\gamma_r\sim\mathcal N(0,1)^k$.
For $i=1,\ldots,r$, sample $x_i$ from $K_{\gamma_i}(x_{i-1},\cdot)$.
Let $\pi$ denote the resulting joint distribution of
$(x_0,\gamma_1,x_1,\ldots,\gamma_r,x_r)$.
Since every $K_g$ is symmetric, the chain is reversible
and the uniform distribution is stationary.
For $1\leq i\leq r$, stationarity makes $x_{i-1}$ uniform on $X$.
Since $\gamma_i$ is sampled independently of $x_{i-1}$,
\begin{equation}\label{eq:walk-progress}
\EX\limits_\pi|\langle\gamma_i,x_i-x_{i-1}\rangle|
=t.
\end{equation}
Also,
\begin{equation}\label{eq:walk-distance}
\EX\limits_\pi\|x_i-x_0\|_2^2
\leq i\EX\limits_\pi\|x_1-x_0\|_2^2\leq i
\qquad\text{by \cref{lemma:euclidean-walk}}.
\end{equation}

Fix $a\in X$.
Let $\rho_a$ be the joint distribution of
$(y_0,\eta_1,y_1,\ldots,\eta_r,y_r)$
generated by the following procedure:
\begin{enumerate}
\item Sample $y_r$ uniformly from $X$.
\item For $i=r,r-1,\ldots,1$, independently choose a plus or minus sign
with equal probability.
Draw an independent vector from $\mathcal N(0,1)^k$ and add
$\pm t(y_i-a)/i$ to obtain $\eta_i$, %
using the chosen sign.
Then sample
\[
y_{i-1}\sim K_{\eta_i}(y_i,\cdot).
\]
\end{enumerate}
The probability that the second procedure ends at $a$ is at most $1$.
With $z_0=a$, its expression in terms of the original walk is
\[
\begin{aligned}
&\Pr\limits_{\rho_a}(y_0=a)\\
&\quad=\frac1n\sum_{z_1,\ldots,z_r\in X}\int_{(\RR^k)^r}
\prod_{i=1}^r\left[
K_{g_i}(z_i,z_{i-1})
\frac{e^{-\frac12\|g_i-\frac ti(z_i-a)\|_2^2}
+e^{-\frac12\|g_i+\frac ti(z_i-a)\|_2^2}}
{2(2\pi)^{k/2}}\right]dg_1\cdots dg_r\\
&\quad=\frac1n\sum_{z_1,\ldots,z_r\in X}\int_{(\RR^k)^r}
\prod_{i=1}^r\left[
K_{g_i}(z_{i-1},z_i)
\frac{e^{-\|g_i\|_2^2/2}}{(2\pi)^{k/2}}
\frac{e^{-\frac12\|g_i-\frac ti(z_i-a)\|_2^2}
+e^{-\frac12\|g_i+\frac ti(z_i-a)\|_2^2}}
{2e^{-\|g_i\|_2^2/2}}\right]dg_1\cdots dg_r\\
&\quad=\EX\limits_\pi\left[
\frac1n\prod_{i=1}^r
\frac{e^{-\frac12\|\gamma_i-\frac ti(x_i-a)\|_2^2}
+e^{-\frac12\|\gamma_i+\frac ti(x_i-a)\|_2^2}}
{2e^{-\|\gamma_i\|_2^2/2}}
\,\middle|\,x_0=a\right].
\end{aligned}
\]
Let $f_a$ denote the positive random variable inside this expectation.
Then $\EX_\pi[f_a\mid x_0=a]\leq1$.
Moreover,
\begin{equation}\label{eq:gaussian-ratio}
\begin{aligned}
\log f_a
&=-\log n+\sum_{i=1}^r\log\left[
\frac{e^{-\|\gamma_i\|_2^2/2}
\left(e^{\frac ti\langle\gamma_i,x_i-a\rangle}
+e^{-\frac ti\langle\gamma_i,x_i-a\rangle}\right)
e^{-\frac{t^2}{2i^2}\|x_i-a\|_2^2}}
{2e^{-\|\gamma_i\|_2^2/2}}\right]\\
&=-\log n+\sum_{i=1}^r\left[
\log\frac{e^{\frac ti\langle \gamma_i,x_i-a\rangle}
+e^{-\frac ti\langle \gamma_i,x_i-a\rangle}}2
-\frac{t^2}{2i^2}\|x_i-a\|_2^2\right]\\
&\geq-\log n-r\log2+\sum_{i=1}^r\left[
\frac ti|\langle \gamma_i,x_i-a\rangle|
-\frac{t^2}{2i^2}\|x_i-a\|_2^2\right].
\end{aligned}
\end{equation}
Consequently,
\[
\EX\limits_\pi[\log f_a\mid x_0=a]
\leq\log\EX\limits_\pi[f_a\mid x_0=a]\leq0
\qquad\text{by Jensen's inequality}.
\]
Averaging over $a\in X$,
\[
\begin{alignedat}{2}
\log n+r\log2
&\geq\sum_{i=1}^r\left[\frac ti\EX\limits_\pi
|\langle\gamma_i,x_i-x_0\rangle|
-\frac{t^2}{2i^2}\EX\limits_\pi\|x_i-x_0\|_2^2\right]
&&\quad\text{by \eqref{eq:gaussian-ratio}}\\
&\geq\sum_{i=1}^r\left[
\frac ti\left(\EX\limits_\pi|\langle\gamma_i,x_i-x_{i-1}\rangle|
-\EX\limits_\pi|\langle\gamma_i,x_{i-1}-x_0\rangle|\right)
-\frac{t^2}{2i}\right]
&&\quad\begin{aligned}&\text{by the triangle inequality}\\
&\text{and \eqref{eq:walk-distance}}\end{aligned}\\
&\geq\sum_{i=1}^r\left[\frac{t^2}{2i}
-\frac ti\EX\limits_\pi
|\langle\gamma_i,x_{i-1}-x_0\rangle|\right]
&&\quad\text{by \eqref{eq:walk-progress}}\\
&\geq\sum_{i=1}^r\left[\frac{t^2}{2i}
-\frac ti\sqrt{\EX\limits_\pi\|x_{i-1}-x_0\|_2^2}\right]
&&\quad\begin{aligned}&\text{by Cauchy-Schwarz}\\
&\text{and independence of $\gamma_i$}\\
&\text{from $(x_0,x_{i-1})$}\end{aligned}\\
&\geq\sum_{i=1}^r\left[\frac{t^2}{2i}-\frac t{\sqrt i}\right]
&&\quad\text{by \eqref{eq:walk-distance}}\\
&=\sum_{i=1}^r\left[\frac{t^2}{4i}-1
+\left(\frac t{2\sqrt i}-1\right)^2\right]\\
&\geq\sum_{i=1}^r\left[\frac{t^2}{4i}-1\right]
&&\quad\text{since $\left(\frac t{2\sqrt i}-1\right)^2\geq0$}\\
&=\frac{t^2}{4}\sum_{i=1}^r\frac1i-r
&&\quad\text{since $\sum_{i=1}^r1=r$}.
\end{alignedat}
\]
Consequently,
\[
\frac{t^2}{4}\log\log n
\leq\frac{t^2}{4}\int_1^{r+1}\frac{du}{u}
\leq\frac{t^2}{4}\sum_{i=1}^r\frac1i
\leq\log n+r(1+\log2)
\leq(3+2\log2)\log n.
\]
\aftermath
\end{proofof}

\end{document}

%% file: applications.tex
\subsection{Applications}

We present two applications of our main embedding result.

\input{cutmatch-intro-undirected}

\subsubsection{Vertex Expansion}

Let $G=(V,E)$ be a graph with $n$ vertices.
For a subset $A\subset V$ with $|A|\leq n/2$,
let $\partial A$ denote the set of vertices in $V\setminus A$ adjacent to some vertex in $A$.
The \defn{vertex expansion} of a vertex set $A$ in $G$ is the ratio $\psi(A):=\frac{|\partial A|}{|A|}$.
The \defn{vertex expansion of a graph} $G$ is the minimum vertex expansion over all possible sets $|A|\leq\frac n2$,
and it is denoted as $\psi(G)$.

A recent line of work concerns a ``Cheeger-style'' approximation
of vertex expansion.
In particular, the following result was obtained by two groups independently in 2022~\cite{kwok2022cheeger,kwok2025cheeger,jain2022dimension}.
\begin{theorem}[{\cite{kwok2022cheeger,kwok2025cheeger,jain2022dimension}}]
\label{thm:vtx-cheeger}
For a graph $G=(V,E)$ with maximum degree $\Delta$,
there is a polynomial time algorithm
that produces a set $S\subset V$
with
$$\psi(S)\in O\left(\sqrt{\psi(G)\cdot\log\Delta}\right).$$
\end{theorem}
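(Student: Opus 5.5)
The plan is to prove \cref{thm:vtx-cheeger} by going through a convex relaxation of $\psi(G)$, in the style of the ``reweighted eigenvalue'' approach. Let $\lambda^*(G)$ be the optimum of the vector program
\[
\lambda^*(G):=\min_{h:V\to\RR^N}\ \frac{\sum_{v\in V}\max_{u\sim v}\|h(u)-h(v)\|_2^2}{\sum_{v\in V}\|h(v)\|_2^2}
\quad\text{subject to}\quad \sum_{v\in V}h(v)=\mathbf 0 .
\]
It is solvable in polynomial time as an SDP, by introducing one auxiliary variable per vertex for the inner maximum. The proof has three steps: an easy lower bound $\lambda^*(G)=O(\psi(G))$, a dimension reduction from $\RR^N$ to the line that loses only a factor $O(\log\Delta)$, and a one-dimensional sweep-cut rounding that loses a square root.

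\emph{Easy direction.} Take $A$ with $|A|\le n/2$ and $\psi(A)=\psi(G)$. Plug in the centered indicator $h=\mathbf 1_A-\frac{|A|}{n}$, viewed as a one-dimensional vector assignment. A vertex $v$ contributes a nonzero numerator term only if $v\in A$ has a neighbour outside $A$, or $v\in\partial A$. Hence the numerator is at most $|\partial A|+|\partial(V\setminus A)|$. This count includes the inner boundary of $A$, so I would instead use the one-sided cutoff $h=\mathbf 1_{A\cup\partial A}$-style test vectors, or note that the inner boundary is at most $\Delta|\partial A|$ only after symmetrization. If this fails, I would follow Kwok et al.\ and take the dual fractional-matching formulation, where the inequality $\lambda^*\le O(\psi)$ becomes an averaging argument. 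The denominator is $\Omega(|A|)$, so $\lambda^*(G)=O(\psi(G))$.

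\emph{Projection step (main obstacle).} Given an optimal $h$, sample $\gamma\sim\mathcal N(0,1)^N$ and set $x(v):=\langle\gamma,h(v)\rangle$. For the denominator, $\EX\sum_v x(v)^2=\sum_v\|h(v)\|_2^2$, and the constraint $\sum_v x(v)=0$ is preserved exactly. For the numerator, each $v$ has at most $\Delta$ neighbours. Each difference $x(u)-x(v)$ is a centered Gaussian with variance $\|h(u)-h(v)\|_2^2$. The standard bound on the expected maximum of $\Delta$ squared Gaussians (no independence needed) then gives
\[
\EX\max_{u\sim v}(x(u)-x(v))^2\le O(\log\Delta)\max_{u\sim v}\|h(u)-h(v)\|_2^2 .
\]
To pass from expectations to a single good $\gamma$, I would use Markov's inequality on the numerator together with a Paley--Zygmund lower bound on the denominator. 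The denominator is a Gaussian quadratic form whose fourth moment is at most three times the square of its second moment. So with constant probability the one-dimensional Rayleigh quotient is $O(\lambda^*\log\Delta)$. This is the step where the $\log\Delta$ enters, and I expect getting the constants and the probability bookkeeping right here to be the delicate part.

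\emph{One-dimensional rounding.} With $x:V\to\RR$ of quotient $R=O(\psi(G)\log\Delta)$, I would first shift by a median and keep the side with at most $n/2$ vertices in its support. This yields $y\ge0$ with $|\mathrm{supp}(y)|\le n/2$, with the numerator and denominator changed only by constant factors. Next, threshold on $y^2$: pick $\tau$ with density proportional to $d\tau$, and set $S_\tau=\{v: y(v)^2>\tau\}$. A vertex $w$ lies in $\partial S_\tau$ only if $\tau\in[y(w)^2,\max_{u\sim w}y(u)^2)$. Therefore
\[
\int|\partial S_\tau|\,d\tau\le\sum_w\max_{u\sim w}|y(u)^2-y(w)^2| .
\]
By Cauchy--Schwarz, using $|a^2-b^2|\le|a-b|(|a|+|b|)$, the right-hand side is at most
\[
\Big(\sum_w\max_{u\sim w}(y(u)-y(w))^2\Big)^{1/2}\cdot O\Big(\sum_w y(w)^2\Big)^{1/2},
\]
where bounding the $\max_{u\sim w} y(u)^2$ terms needs a little care, e.g.\ splitting $|y(u)|\le|y(w)|+|y(u)-y(w)|$. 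Meanwhile $\int|S_\tau|\,d\tau=\sum_w y(w)^2$. Averaging then produces a threshold with $\psi(S_\tau)=O(\sqrt R)=O(\sqrt{\psi(G)\log\Delta})$. Every step is polynomial time: solve the SDP, sample $\gamma$ and retry on failure, then sweep over the $n$ thresholds.
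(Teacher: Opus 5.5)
Your proposal has a genuine gap in the easy direction. It cannot be fixed by a better test vector, because $\lambda^*(G)=O(\psi(G))$ is false for your relaxation.

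\paragraph{Why $\lambda^*$ is the wrong relaxation.} Your numerator $\sum_v\max_{u\sim v}\|h(u)-h(v)\|_2^2$ charges \emph{both} endpoints of every edge. So it relaxes symmetric vertex expansion (inner plus outer boundary), whereas $\psi$ counts only the outer boundary $\partial A$. Concretely, let $G$ be the star with center $c$ and $n-1$ leaves.
\begin{itemize}
\item Taking $A$ to be $\lfloor n/2\rfloor$ leaves gives $\partial A=\{c\}$, so $\psi(G)\le 1/\lfloor n/2\rfloor$.
\item For any $h$ with $\sum_v h(v)=\mathbf 0$, each leaf $\ell$ contributes exactly $\|h(\ell)-h(c)\|_2^2$. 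Hence your numerator is at least $\sum_v\|h(v)-h(c)\|_2^2=\sum_v\|h(v)\|_2^2+n\|h(c)\|_2^2$, and so $\lambda^*(G)\ge 1$.
\end{itemize}
Your chain therefore certifies only $\psi(S)=O(\sqrt{\log n})$ on this graph, instead of the required $O(\sqrt{\log n/n})$. In general $\lambda^*$ can exceed $\psi$ by a factor of order $\Delta$, since the inner boundary can be as large as $\Delta|\partial A|$; this swamps the $\log\Delta$. None of your fallbacks helps. The test vector $\mathbf 1_{A\cup\partial A}$ fails on the star, since the remaining leaves all see $c$. Symmetrization loses a factor $\Delta$. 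A dual reformulation cannot prove an inequality that is false.

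\paragraph{How to repair it.} Change the numerator to a fractional vertex cover, i.e., use the relaxation $\gamma^{(n)}(G)$ of \eqref{eq:gamma-d}. This minimizes $\sum_v y(v)/\sum_v\|h(v)\|_2^2$ subject to $y(u)+y(v)\ge\|h(u)-h(v)\|_2^2$ on edges.
\begin{itemize}
\item \emph{Easy direction.} The pair $h=\mathbf 1_A-\frac{|A|}{n}$, $y=\mathbf 1_{\partial A}$ is feasible, because every edge leaving $A$ has its outer endpoint in $\partial A$. This gives $\gamma^{(n)}(G)\le\gamma^{(1)}(G)\le 2\psi(G)$.
\item \emph{Projection.} Set $y'(v):=y(v)\max_{u\sim v}\langle\gamma,h(u)-h(v)\rangle^2/\|h(u)-h(v)\|_2^2$, taking the max over neighbours with $h(u)\ne h(v)$. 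Then $y'(u)+y'(v)\ge\langle\gamma,h(u)-h(v)\rangle^2$, and $\EX y'(v)=O(\log\Delta)\,y(v)$ by the same max-of-Gaussians bound. Your Markov and Paley--Zygmund bookkeeping then goes through unchanged.
\item \emph{Rounding.} Your sweep cannot be reused verbatim. Its bound passes through $\sum_w\max_{u\sim w}(x(u)-x(w))^2$, which under the cover constraint is only $O(\Delta\sum_v y'(v))$: a single vertex with large $y'$ may pay for edges to many boundary neighbours. The last step should instead be the one-dimensional inequality $\psi(S)^2=O\bigl(\sum_v y'(v)/\sum_v x(v)^2\bigr)$ behind \cref{thm:cheeger-vertex-one-dim}.
\end{itemize}
The paper only cites this theorem rather than proving it. The repaired argument is the $\log\Delta$ analogue of the paper's \cref{thm:degree-dim-red}.
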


We improve this result for graphs with large maximum degree.
\begin{theorem}
\label{thm:dense-vtx-cheeger}
For a graph $G=(V,E)$ with $n$ vertices,
there is a randomized polynomial time algorithm
that, with probability at least $2/3$, produces a set $S\subset V$
with
$$\psi(S)\in O\left(\sqrt{\psi(G)\cdot\frac{\log n}{\log\log n}}\right).$$
\end{theorem}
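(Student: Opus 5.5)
The plan is to follow the known approach to \cref{thm:vtx-cheeger} (Kwok--Lau--Tung and Jain--Pham--Vuong), keeping its SDP relaxation and its rounding framework, and to change only the step where the $\log\Delta$ loss enters. The $\log\Delta$ factor there comes from projecting an SDP solution to one dimension. We will replace that projection with the non-expansive line embedding of \cref{thm:embed}, which loses only $\sqrt{\log n/\log\log n}$ in average distortion.

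\textbf{Step 1: the relaxation.} We solve the standard vector program for vertex expansion. It assigns vectors $v_x\in\RR^k$ to vertices $x$. Its objective is $\sum_{x}\max_{y\sim x}\|v_x-v_y\|_2^2$, or an equivalent $\ell_2^2$ relaxation of $|\partial A|$. It is normalized so that $\sum_{x,y}\|v_x-v_y\|_2^2=n$, or with a spread constraint. Its value is at most $O(\psi(G))$. We then pass to Euclidean (not squared) distances in the style of the Cheeger analysis. The aim is a point set $X=\{v_x\}$ whose objective, measured by the $\ell_2$ vertex-boundary functional $\sum_x\max_{y\sim x}\|v_x-v_y\|_2$, is at most $O(\sqrt{\psi(G)})$ times the total spread $\sum_{x}\|v_x-\mu\|_2$. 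We get this by the usual square-root trick and Cauchy--Schwarz: replace $v$ with a suitable scaling so that $\ell_2^2$ energy becomes $\ell_2$ energy, losing a square root.

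\textbf{Step 2: the embedding.} We apply \cref{thm:embed} to the point set $X$ and obtain a non-expansive $f:V\to\RR$ with $\sum_{x,y}|f(x)-f(y)|\ge \Omega\bigl(\sqrt{\log\log n/\log n}\bigr)\sum_{x,y}\|v_x-v_y\|_2$. Non-expansiveness gives
\[
\max_{y\sim x}|f(x)-f(y)|\le \max_{y\sim x}\|v_x-v_y\|_2 ,
\]
so the vertex-boundary functional of $f$ is no larger than that of $v$, while its spread drops by at most the stated factor. The embedding is constructive: we use \cref{lemma:large-core} in the concentrated case, and otherwise a random Gaussian direction, a min-cost matching, and the dual potential from \cref{lemma:matching-duality}. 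By Markov's inequality applied to the expectation in \cref{cor:separation}, a random $\gamma$ succeeds with constant probability, and repetition boosts this to probability $2/3$.

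\textbf{Step 3: rounding.} We perform one-dimensional threshold rounding (a sweep cut) on $f$, shifted to a median so that each side has size at most $n/2$. The key point is that a vertex $x$ lies in $\partial\{f\le\theta\}$ only when $\theta\in[f(x)-\max_{y\sim x}|f(x)-f(y)|,\,f(x))$. Integrating over $\theta$ therefore bounds the expected $|\partial S|$ by the vertex functional of $f$, while $|S|$ integrates to the spread. Combining Steps 1--3 gives
\[
\psi(S)\le O\bigl(\sqrt{\psi(G)}\bigr)\cdot O\Bigl(\sqrt{\tfrac{\log n}{\log\log n}}\Bigr).
\]

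\textbf{Main obstacle.} The hardest step is Step 1. We must verify that the $\ell_2^2$ relaxation can be converted into an $\ell_2$ point set with a max-over-neighbors objective controlled by $\sqrt{\psi(G)}$. Only then does the remaining loss reduce purely to an average-distortion question, rather than a higher-moment one, and only then does the bound multiply rather than compose badly. In the cited works this conversion goes through the dimension-reduction/Gaussian-projection argument, which is exactly where $\log\Delta$ appears. So I would first re-derive their inequality in a form where the projection step is isolated as ``find a non-expansive $f$ with large $\sum|f(x)-f(y)|$'', and then substitute \cref{thm:embed} for it. If the max-over-neighbors structure resists this, a fallback is to apply \cref{lemma:transitions} directly, with $K_g$ built from the graph neighborhoods.
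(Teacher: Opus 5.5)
There is a genuine gap, and it is the step you flag yourself: Step 1 is never established, and as formulated it is false. Your program, with objective $\sum_x\max_{y\sim x}\|v_x-v_y\|_2^2$ under $\sum_{x,y}\|v_x-v_y\|_2^2=n$, charges vertices on both sides of a cut. On indicator vectors it counts $|\partial A|+|\partial(V\setminus A)|$, so it relaxes \emph{symmetric} vertex expansion, not $\psi(G)$.

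The star $K_{1,n-1}$ with center $c$ shows the failure. There $\psi(G)=1/\lfloor n/2\rfloor$. Yet for every vector assignment the objective is at least $\sum_x\|v_x-v_c\|_2^2\geq\sum_x\|v_x-\mu\|_2^2=\frac1{2n}\sum_{x,y}\|v_x-v_y\|_2^2=\frac12$. So ``its value is at most $O(\psi(G))$'' fails. Steps 2--3, which are built on this max-over-neighbours functional, therefore cannot yield the theorem.

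The $\ell_2^2\to\ell_2$ conversion in high dimension is also only gestured at. The one-dimensional trick relies on a median shift and on squaring the positive and negative parts separately. For vectors you would need a substitute map (e.g.\ $v\mapsto v\|v\|_2$), a new centering estimate, and a sweep adapted to the boundary functional. None of this is supplied, and the fallback via \cref{lemma:transitions} is not developed either.

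Your Step 2 observation, that a non-expansive line embedding cannot increase the boundary constraint, is exactly the mechanism the paper uses. The paper then needs two ingredients your plan lacks.

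First, it uses the Olesker-Taylor--Zanetti program $\gamma^{(d)}(G)$. Its constraint $y(u)+y(v)\geq\|f(u)-f(v)\|_2^2$ has vertex-cover structure that correctly captures the one-sided boundary, so $\gamma^{(n)}(G)\leq\gamma^{(1)}(G)=O(\psi(G))$. Its one-dimensional version comes with a constructive sweep giving $\psi(S)^2=O\bigl(\sum_v y(v)/\sum_v f(v)^2\bigr)$ (\cref{thm:cheeger-vertex-one-dim}). The square root is thus taken only once, at the end, in one dimension.

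Second, it stays in $\ell_2^2$ throughout the dimension reduction. It embeds the SDP vectors via \cref{thm:embed}, then upgrades average distortion $D$ to average squared distortion $O(D^2)$ by Naor's comparison lemma (\cref{lemma:sqr-avd}). Non-expansiveness keeps $(f',y)$ feasible, so $\gamma^{(1)}(G)=O\bigl(\frac{\log n}{\log\log n}\gamma^{(n)}(G)\bigr)$. The chain $\psi(S)^2=O\bigl(\frac{\log n}{\log\log n}\gamma^{(n)}(G)\bigr)=O\bigl(\frac{\log n}{\log\log n}\gamma^{(1)}(G)\bigr)=O\bigl(\frac{\log n}{\log\log n}\psi(G)\bigr)$ finishes. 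Your insistence on reducing to a first-moment distortion question ``rather than a higher-moment one'' is exactly what \cref{lemma:sqr-avd} makes unnecessary.

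A smaller point concerns success probability. Markov's inequality does not turn the lower bound on $\EX|A_\gamma|$ in \cref{cor:separation} into a constant success probability. Combine the deterministic bound $|A_\gamma|=O(n\Delta)$ (with $\Delta$ as in that corollary) with a reverse-Markov argument. This gives success probability $\Omega\bigl(\sqrt{\log\log n/\log n}\bigr)$ per trial, so repeat and keep the best computable outcome.
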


%% file: cutmatch-intro-undirected.tex
\subsubsection{Cut-Matching Games}

In 2006, Khandekar, Rao, and Vazirani introduced the \defn{cut-matching game}~\cite{khandekar2009graph},
which has been used in various forms
to efficiently approximate several graph partitioning problems~\cite{orecchia2008partitioning,louis2010cut,lau2024fast}.

\begin{definition}\label{def:undirected-cut-matching}
Fix a vertex set $V$ of (even) size $n$.
Two players alternate for $T$ rounds: A \defn{cut player} and a \defn{matching player}.
In round $t$ of the \defn{cut-matching game}:
\begin{enumerate}
\item The cut player chooses a bisection $(S_t,V\setminus S_t)$.
\item The matching player chooses a perfect matching $M_t$ across this bisection.
\end{enumerate}
After $T$ rounds, let $H:=(V,M_1\cup M_2\cup\cdots\cup M_T)$
be the graph consisting of the union of all these matchings,
retaining repeated edges.
\end{definition}

In the original formulation,
the cut player seeks to maximize the \defn{edge expansion}
of $H$,
given by
\[
\phi(H):=\min_{\emptyset\ne S\subsetneq V}
\frac{|E_H(S,V\setminus S)|}{\min\{|S|,|V\setminus S|\}},
\]
regardless of the matching player's choices.

Khandekar, Rao, and Vazirani originally
proved that a cut player could achieve
$\phi(H)=\Omega(1)$ after $O(\log^2n)$ iterations (with high probability)~\cite{khandekar2009graph}.

In 2008,
Orecchia, Schulman, Vazirani, and Vishnoi
improved this to
$\phi(H)=\Omega(\log n)$ after $O(\log^2n)$ iterations (with constant probability)~\cite{orecchia2008partitioning}.
In fact, they showed that the cut player can achieve the same guarantee
on a spectral quantity:
Let $L(H)$ denote the Laplacian matrix of $H$,
and let $\lambda_2(H)$ denote its second-smallest eigenvalue.
Note that $\phi(H)\geq\frac12\lambda_2(H)$.
Their result actually guarantees the stronger bound of $\lambda_2(H)=\Omega(\log n)$.

Later, Sherman showed that this improved bound was nearly optimal~\cite{sherman2009breaking}.
In particular, he showed that
the matching player can force
\[
\lambda_2(H)=O\left(T\frac{\log\log n}{\log n}\right).
\]
We tighten the cut player strategy to make this bound tight.
In fact, doing so using \cref{thm:embed}
would be quite straightforward.
However, the resulting algorithm would be quite slow.
Instead, with more work, we are able to prove the existence of
a cut player strategy that achieves this bound
using near-linear work per round.

\begin{theorem}\label{thm:undirected-cut-matching}
For the cut-matching game,
there is a (randomized) cut-player strategy using near-linear work per round
that, against every matching player, forces
\[
\phi(H)\geq\frac12\lambda_2(H)=\Omega(\log n),
\]
in $O(\log^2 n/\log\log n)$ rounds,
with probability at least $2/3$.
\end{theorem}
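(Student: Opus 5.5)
We would follow the potential-function framework of Orecchia, Schulman, Vazirani, and Vishnoi, with the per-round progress sharpened by the Gaussian-projection machinery behind \cref{lemma:transitions}. Let $W_t$ be the random-walk (heat-kernel or matrix-exponential) operator built from $M_1,\ldots,M_t$, for instance $W_t=\exp(-\eta L(M_1\cup\cdots\cup M_t))$ restricted to the space orthogonal to $\mathbf 1$. The potential is $\Psi_t=\operatorname{tr}(W_t^2)$. The rows of $W_t$ give vectors $w_v\in\RR^n$; let $X=\{w_v\}$, which is an $n$-point Euclidean set. Then $\Psi_t=\sum_v\|w_v-\bar w\|_2^2$, which is proportional to $\frac1n\sum_{u,v}\|w_u-w_v\|_2^2$.

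In each round the cut player works as follows. It samples $\gamma\sim\mathcal N(0,1)^n$ and projects, using $\langle\gamma,w_v\rangle$; via a Johnson--Lindenstrauss sketch and a truncated Taylor or Krylov approximation of the exponential, this takes near-linear work. It then bisects $V$ by the median of the projection. The new matching $M_{t+1}$ crosses this bisection, and the standard argument shows that the potential drop is at least a constant times $\sum_{(u,v)\in M_{t+1}}\|w_u-w_v\|_2^2$. By the projection ordering, this is at least on the order of $\sum_{uv\in M}\langle\gamma,w_u-w_v\rangle^2$ divided by the typical projection-to-length ratio.

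The key step is to show that, in expectation over $\gamma$, every matching across the median cut has total squared length at least $\Omega\left(\frac{\log\log n}{\log n}\right)\Psi_t$. The OSVV bound here is $1/\log n$. This is the squared, spectral analogue of \cref{cor:separation}. We would prove it the same way: take a matching $A_\gamma$ of pairs split by the projection; build the symmetric kernel $K_\gamma$ from it; apply \cref{lemma:transitions} to get $(\EX|\langle\gamma,x_1-x_0\rangle|)^2\lesssim \EX\|x_1-x_0\|_2^2\,\frac{\log n}{\log\log n}$; and lower-bound the left side using the well-spread property. The well-spread property should be obtained from the standard reduction to the case where a constant fraction of the potential comes from points far from the mean, or by first discarding the top-heavy case as in \cref{lemma:large-core}.

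Given this per-round bound, $\Psi$ decreases by a factor $1-\Omega(\log\log n/\log n)$ per round. Starting from $\Psi_0\le n$, after $O(\log^2 n/\log\log n)$ rounds we reach $\Psi_T\le 1/\operatorname{poly}(n)$. With $\eta$ chosen on the order of $1/\log n$ per unit step, the standard OSVV conversion from a small $\operatorname{tr}(W_T^2)$ gives $\lambda_2(H)=\Omega(\log n)$. Markov's inequality over the random choices then gives success probability $2/3$. Finally, $\phi(H)\ge\frac12\lambda_2(H)$ is the stated Cheeger direction.

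The main obstacle is making the matching step work for an \emph{adversarial} matching across a bisection, rather than a minimum-weight matching between extreme sets, and doing so with squared lengths. My first attempt would use the adversary's matching directly as $A_\gamma$ in $K_\gamma$, with weights $1/(1+\|u-v\|)$ replaced by constants. I would then control $|\langle\gamma,u-v\rangle|$ from below using the fact that a constant fraction of the potential lies at projection distance $\Omega(\sqrt{\Psi/n})$ from the median. The near-linear-work requirement, handled through sketching accuracy, should be routine by comparison.
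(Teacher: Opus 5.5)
Your Gaussian step, with \cref{lemma:transitions} applied to the adversary's matching viewed as a permutation kernel, is exactly the paper's. However, the key per-round claim does not follow from it, and the repair you propose via well-spreadness cannot work. Write $r_v:=\|z_v\|_2$ and $\bar r:=\frac1n\sum_v r_v$. Since $x_0$ is uniform and $M_\gamma$ is a bijection, $\EX|\langle\gamma,x_1-x_0\rangle|\leq\frac2n\sqrt{2/\pi}\sum_v r_v$ whatever the geometry. So the strongest conclusion this route can deliver is $\EX\sum_{uv\in M_\gamma}\|z_u-z_v\|_2^2=\Omega\bigl(\frac{\log\log n}{\log n}\bigr)\cdot\frac1n\bigl(\sum_v r_v\bigr)^2$. (The lower bound on the left side needs no spreading hypothesis at all: $\sum_v|\langle g,z_v\rangle|\leq2\sum_{uv\in M_g}|\langle g,z_u-z_v\rangle|$ holds for any perfect matching across the median cut, using only $\sum_v z_v=\mathbf 0$.) But $\frac1n(\sum_v r_v)^2=\sum_v r_v^2-\sum_v(r_v-\bar r)^2$ falls short of the potential $\sum_v r_v^2$ by the variance of the norms, and that variance can dominate. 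Knowing that a constant fraction of the potential sits far from the median does not help, because \cref{lemma:transitions} averages uniformly over points rather than over potential. The dichotomy of \cref{lemma:large-core} does not detect the problem either: $n-1$ orthonormal vectors plus one orthogonal vector of norm $n^{0.7}$ (recentred) form a well-spread set with $\frac1n(\sum_v r_v)^2\approx n$ but $\sum_v r_v^2\approx n^{1.4}$. Finally, a cut player cannot ``discard'' a case: it must commit to a bisection against which every crossing matching makes progress.

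The missing idea is a second cut that pays for the variance term. This is why the paper's cut player flips a fair coin between the Gaussian median bisection and the bisection by the norms $\|z_v\|_2$. For the norm bisection, \cref{lemma:scalar-median} applied to $r_v-\bar r$, together with $|r_u-r_v|\leq\|z_u-z_v\|_2$, gives $\sum_{uv\in M}\|z_u-z_v\|_2^2\geq\sum_v(r_v-\bar r)^2$ for every crossing matching. The Gaussian branch, via the median inequality and \cref{lemma:transitions} as above, bounds $\frac1n(\sum_v r_v)^2$ by $O\bigl(\frac{\log n}{\log\log n}\bigr)\EX\sum_{uv\in M_\gamma}\|z_u-z_v\|_2^2$. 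Averaging the two branches yields expected progress $\Omega\bigl(\frac{\log\log n}{\log n}\bigr)\sum_v\|z_v\|_2^2$ per round, and the MMWU regret bound with a stopping-time and Markov argument finishes. Your potential-function bookkeeping could replace that last part, but only with a constant step size $\eta$ (the paper uses $e^{-L/4}$). The guaranteed drop per round is of order $\eta\sum_{uv\in M}\|w_u-w_v\|_2^2$, so with $\eta$ of order $1/\log n$ the contraction factor is only $1-\Omega\bigl(\frac{\log\log n}{\log^2 n}\bigr)$, and $O(\log^2 n/\log\log n)$ rounds do not suffice.
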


%% file: cutmatch-undirected-ltw.tex
\section{Cut-Matching Games}

We essentially provide a new dimension-reduction
algorithm that could be applied to modifications of several
different algorithms.
We specifically choose to modify the algorithm
of Lau, Tung, and Wang~\cite{lau2024fast}
as given in simplified form by Lau in his recent textbook~\cite[Section~18.3]{lau2027algorithmic}.
Like Lau, we choose to initially present a slow form of the algorithm
that computes exact matrix exponentials.

\begin{algorithm}[H]
\caption{Cut player}
\label{alg:undirected-cut-player}
\algrenewcommand\algorithmicrequire{\textbf{Input:}}
\algrenewcommand\algorithmicensure{\textbf{Output:}}
\begin{algorithmic}[1]
\Require An ordered set $V$ of even size $n$, a positive integer $T$,
and access to a matching player.
\Ensure The union $H_T$ of the matching replies, retaining repeated edges.
\State Initialize $H_0$ as the empty graph on $V$.
\For{$t=1,\ldots,T$}
\State Set $W_t\gets e^{-L(H_{t-1})/4}$ and $P_t\gets W_t/(\operatorname{Tr}(W_t)-1)$.
\State Let $z_v\in\RR^V$ be row $v$ of
$P_t^{1/2}(I-\mathbf1\mathbf1^T/n)$, where $\mathbf1\in\RR^V$ is the all-ones vector.
\State Flip a fair coin.
\State \qquad If heads, set $y_v\gets\|z_v\|_2$ for all $v$.
\State \qquad If tails, sample an independent $\gamma\sim\mathcal N(0,I_n)$,
and set $y_v\gets\langle\gamma,z_v\rangle$ for all $v$.
\State Query the bisection given by the $n/2$ smallest values of $y_v$,
breaking ties by the order on $V$.
Receive $M_t$ and set $H_t\gets H_{t-1}+M_t$.
\EndFor
\State \Return $H_T$.
\end{algorithmic}
\end{algorithm}

The algorithm as presented by Lau is equivalent to always taking the ``tails'' branch.
Hence, much of the analysis will still immediately apply,
particularly regarding the fast implementation of this algorithm.
We first require two lemmas.

\begin{lemma}
\label{lemma:scalar-median}
Let $(y_v)_{v\in V}$ be real numbers with $\sum_{v\in V}y_v=0$,
where $|V|=n$ is even.
For every perfect matching $M$ between the smallest and largest halves,
\[
\sum_{uv\in M}(y_u-y_v)^2\geq\sum_{v\in V}y_v^2.
\]
\end{lemma}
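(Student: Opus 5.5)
Sort the values as $y_{(1)}\le\cdots\le y_{(n)}$ and let $m$ be any median, i.e.\ any number with $y_{(n/2)}\le m\le y_{(n/2+1)}$. Let $L$ be the smallest $n/2$ values and $U$ the largest $n/2$ values. The plan is to shift everything by $m$ and use the fact that the matching only pairs points on opposite sides of $m$.

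\textbf{Step 1: the pairwise bound.} Each edge $uv\in M$ has $u\in L$ and $v\in U$, so $y_u\le m\le y_v$. Hence $y_u-m\le0\le y_v-m$, and
\[
(y_u-y_v)^2=\bigl((y_v-m)+(m-y_u)\bigr)^2\geq (y_v-m)^2+(y_u-m)^2,
\]
because the cross term $2(y_v-m)(m-y_u)$ is nonnegative.

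\textbf{Step 2: sum over the matching.} Since $M$ is a perfect matching between $L$ and $U$, every vertex appears in exactly one edge. Summing Step 1 over $M$ gives
\[
\sum_{uv\in M}(y_u-y_v)^2\geq\sum_{v\in V}(y_v-m)^2 .
\]

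\textbf{Step 3: use the zero-sum hypothesis.} Expanding,
\[
\sum_v(y_v-m)^2=\sum_v y_v^2-2m\sum_v y_v+nm^2=\sum_v y_v^2+nm^2\geq\sum_v y_v^2,
\]
since $\sum_v y_v=0$. This completes the argument.

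\textbf{Ties.} If the halves are determined by a tie-breaking order, the inequalities $y_u\le m\le y_v$ still hold for every matched pair, so nothing changes. There is no real obstacle here; the only thing to get right is choosing $m$ as the median rather than $0$, which is what makes the cross term in Step 1 nonnegative.
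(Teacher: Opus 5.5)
Your proof is correct and is essentially the paper's proof: choose $m$ between the two halves, drop the nonnegative cross term $2(m-y_u)(y_v-m)$ edge by edge, sum over the perfect matching to get $\sum_{v}(y_v-m)^2$, and expand using $\sum_v y_v=0$ to obtain $\sum_v y_v^2+nm^2\geq\sum_v y_v^2$. Your remark on ties is also fine, since tie-breaking still leaves every value in the lower half at most every value in the upper half.
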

\begin{proof}
Following the proof of \cite[Lemma~17.10]{lau2027algorithmic},
choose $m$ between the two halves and write each edge $uv\in M$
with $y_u\leq m\leq y_v$.
Since $m-y_u$ and $y_v-m$ are nonnegative,
\[
(y_u-y_v)^2
=(y_u-m)^2+(y_v-m)^2+2(m-y_u)(y_v-m)
\geq(y_u-m)^2+(y_v-m)^2.
\]
Summing over $M$ gives
\[
\begin{aligned}
\sum_{uv\in M}(y_u-y_v)^2
&\geq\sum_{v\in V}(y_v-m)^2\\
&=\sum_{v\in V}y_v^2-2m\sum_{v\in V}y_v+nm^2\\
&=\sum_{v\in V}y_v^2+nm^2
&&\text{since $\sum_{v\in V}y_v=0$}\\
&\geq\sum_{v\in V}y_v^2.
\end{aligned}
\]
\end{proof}

Our change to the dimension reduction gives the following guarantee,
independently of how the vectors are computed.

\begin{lemma}\label{lemma:cut-dimension-reduction}
Let $(z_v)_{v\in V}$ be vectors in $\RR^k$ with $\sum_{v\in V}z_v=\mathbf 0$,
where $|V|=n$ is even.
Flip a fair coin.
If heads, let $S$ consist of the $n/2$ vertices with smallest $\|z_v\|_2$.
If tails, sample $\gamma\sim\mathcal N(0,I_k)$ and let $S$ consist of the
$n/2$ vertices with smallest $\langle\gamma,z_v\rangle$.
For each outcome, choose any perfect matching $M$ between $S$ and $V\setminus S$.
Then
\[
\EX\left[\sum_{uv\in M}\|z_u-z_v\|_2^2\right]
\geq c\frac{\log\log n}{\log n}\sum_{v\in V}\|z_v\|_2^2
\]
for a universal $c>0$.
\end{lemma}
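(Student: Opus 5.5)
We normalize $\sum_v\|z_v\|_2^2=n$ and split into two cases, mirroring the two-case structure of \cref{thm:embed}: a concentrated case handled by the heads branch (the analogue of \cref{lemma:large-core}) and a well-spread case handled by the tails branch via \cref{lemma:transitions}. Since the coin is fair, it suffices to show that in each case the corresponding branch alone gives expected matching weight $\Omega(n\log\log n/\log n)$.

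\emph{Heads branch.} Here $S$ is the $n/2$ points of smallest norm. Apply \cref{lemma:scalar-median} to the centered scalars $y_v=\|z_v\|_2-\mu$, where $\mu$ is the mean norm. Every matching between the halves ordered by norm is a matching between the smallest and largest halves of $(y_v)$. Combined with $\|z_u-z_v\|_2\geq|\|z_u\|_2-\|z_v\|_2|$, this gives
\[
\sum_{uv\in M}\|z_u-z_v\|_2^2\geq\sum_v(\|z_v\|_2-\mu)^2=n-n\mu^2.
\]
So the heads branch succeeds unless $\mu^2\geq 1-\delta$ for a small constant $\delta$, i.e.\ unless most of the mass lies near the sphere of radius $\approx1$.

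\emph{Tails branch, in the remaining case.} Now norms concentrate near $1$, and $\sum z_v=0$ means the points are spread around the origin on that sphere. The goal is a well-spread statement: any ball of radius a small constant contains at most a constant fraction of the points. This should follow because $\sum_{u,v}\|z_u-z_v\|_2^2=2n\sum\|z_v\|^2=2n^2$ while norms are near $1$, so pairwise squared distances are $\Theta(1)$ for a constant fraction of pairs. We then copy the proof of \cref{cor:separation}, taking $K_g$ to be the transition matrix built from the matching $M_g$ with weights scaled by $1/\|z_u-z_v\|$ or similar. Two inputs are needed:
\begin{itemize}
\item the projection gap $\EX|\langle\gamma,x_1-x_0\rangle|$ is bounded below by a constant, using well-spreadness and the fact that $M$ crosses the median of $\langle\gamma,z\rangle$;
\item \cref{lemma:transitions} bounds the square of that gap by $O(\EX\|x_1-x_0\|^2\log n/\log\log n)$.
\end{itemize}
Putting these together, the expected $\frac1n\sum_{uv\in M}\|z_u-z_v\|^2$ is at least $\Omega(\log\log n/\log n)$.

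The simplest concrete choice is $K_g(x,y)=\mathbf 1_{xy\in M_g}$, which is a symmetric permutation matrix. With it, the walk step $x_1-x_0$ is exactly the matching edge, so $\EX\|x_1-x_0\|^2=\frac2n\EX\sum_M\|z_u-z_v\|^2$. Also $\EX|\langle\gamma,x_1-x_0\rangle|=\frac2n\EX\sum_M|\langle\gamma,z_u-z_v\rangle|$. By \cref{lemma:scalar-median} applied to $y_v=\langle\gamma,z_v\rangle$, together with Cauchy--Schwarz, the latter is at least $\frac{2}{n}\cdot\frac{(\sum_v |y_v-m|)}{1}$-type quantities. These reduce to $\Omega(\EX\frac1n\sum_v|\langle\gamma,z_v\rangle|)=\Omega(\frac1n\sum\|z_v\|)=\Omega(\mu)=\Omega(1)$. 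This is the key gain: the gap is an $\ell_1$-type lower bound, not an $\ell_2$ one, and it is constant precisely because $\mu$ is bounded below. The main obstacle is making this $\ell_1$ bound rigorous. Unlike the $\ell_2$ identity of \cref{lemma:scalar-median}, $\sum_{M}|y_u-y_v|$ equals $\sum_{v\in\text{top}}y_v-\sum_{v\in\text{bottom}}y_v$ exactly, since every edge crosses the median. That quantity is at least $\sum_v|y_v-m|\geq\sum_v|y_v|-n|m|$, so we must control $|m|$. For that we use $\sum y_v=0$: the median is at most the $\ell_1$ mean times a constant only after some care, possibly by instead bounding $\sum_{\text{top}}y_v-\sum_{\text{bottom}}y_v\geq\frac12\sum|y_v|$ directly. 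Once the gap is $\Omega(1)$, \cref{lemma:transitions} gives $1\lesssim \frac1n\EX\sum_M\|z_u-z_v\|^2\cdot\log n/\log\log n$, which is the claim.
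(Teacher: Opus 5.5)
Your architecture matches the paper's. The heads branch, via \cref{lemma:scalar-median} applied to $\|z_v\|_2-\mu$, controls the variance of the norms. The tails branch, with $K_g$ the permutation matrix of $M_g$ and \cref{lemma:transitions}, controls the mean norm $\mu$. Your case split on $\mu^2\geq1-\delta$ is an inessential variant of the paper's identity $\sum_v\|z_v\|_2^2=\sum_v(\|z_v\|_2-\mu)^2+n\mu^2$. The well-spreadness detour, with a reweighted $K_g$ copied from \cref{cor:separation}, is unnecessary: once $K_g$ is the matching itself, the only input needed is $\mu=\Omega(1)$. The detour would also fail as stated. Put half the points at $e_1$ and half at $-e_1$: this meets your case assumption, yet a small ball contains $n/2$ points, so the hypothesis of \cref{cor:separation} fails.

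The genuine gap is the step you flag as the main obstacle: the bound $\sum_{uv\in M_g}|\langle g,z_u-z_v\rangle|\geq\frac12\sum_v|\langle g,z_v\rangle|$, which is what makes the projection gap $\Omega(\mu)$. You do not prove it, and neither route you sketch gives it.
\begin{itemize}
\item \cref{lemma:scalar-median} plus Cauchy--Schwarz runs the wrong way, since an $\ell_2$ lower bound does not yield an $\ell_1$ lower bound.
\item Bounding the median $|m|$ by a constant times $\frac1n\sum_v|y_v|$ and subtracting also fails. The best constant is $1$ (take $n/2+1$ equal positive values and the rest equal negative ones), which makes $\sum_v|y_v|-n|m|$ vacuous.
\end{itemize}
The missing observation is that the mean-zero condition also gives the reverse bound $n|m|=\bigl|\sum_v(y_v-m)\bigr|\leq\sum_v|y_v-m|$. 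Hence
\[
\sum_v|y_v|\leq\sum_v|y_v-m|+n|m|\leq2\sum_v|y_v-m|=2\sum_{uv\in M_g}|y_u-y_v|,
\]
which is exactly the paper's median argument. With this, $\EX|\langle\gamma,x_1-x_0\rangle|\geq\sqrt{2/\pi}\,\mu$, and the rest of your argument goes through as written.
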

\begin{proof}
Let $M_r$ be the matching chosen when the coin is heads.
For each $g\in\RR^k$, let $M_g$ be the matching chosen when the coin is tails
and $\gamma=g$.
Let $r_v:=\|z_v\|_2$ and $\bar r:=\frac1n\sum_{v\in V}r_v$.
As we will see, the norm bisection controls the variance of $r_v$,
while the Gaussian bisection controls its mean.
We have
\begin{equation}\label{eq:cut-norm}
\begin{aligned}
\sum_{v\in V}(r_v-\bar r)^2
&\leq\sum_{uv\in M_r}(r_u-r_v)^2
&&\text{by \cref{lemma:scalar-median} for $r_v-\bar r$}\\
&\leq\sum_{uv\in M_r}\|z_u-z_v\|_2^2
&&\text{by the triangle inequality}.
\end{aligned}
\end{equation}

For a fixed $g\in\RR^k$, let $m_g$ be a median of
$(\langle g,z_v\rangle)_{v\in V}$.
Then
\begin{equation}\label{eq:cut-median-shift}
\begin{aligned}
n\cdot|m_g|
&=\left|\sum_{v\in V}(m_g-\langle g,z_v\rangle)\right|
&&\text{since $\sum_{v\in V}z_v=\mathbf 0$}\\
&\leq\sum_{v\in V}|\langle g,z_v\rangle-m_g|
&&\text{by the triangle inequality}.
\end{aligned}
\end{equation}
Thus
\begin{equation}\label{eq:cut-median}
\begin{aligned}
\sum_{v\in V}|\langle g,z_v\rangle|
&\leq\sum_{v\in V}|\langle g,z_v\rangle-m_g|+n\cdot|m_g|
&&\text{by the triangle inequality}\\
&\leq2\sum_{v\in V}|\langle g,z_v\rangle-m_g|
&&\text{by \eqref{eq:cut-median-shift}}\\
&=2\sum_{uv\in M_g}|\langle g,z_u-z_v\rangle|.
\end{aligned}
\end{equation}
Taking expectations, we obtain
\begin{equation}\label{eq:cut-projection}
\begin{aligned}
\sqrt{\frac2\pi}\sum_{v\in V}r_v
&=\EX_{\gamma\sim\mathcal N(0,I_k)}\left[\sum_{v\in V}|\langle\gamma,z_v\rangle|\right]
&&\text{since $\langle\gamma,z_v\rangle\sim\mathcal N(0,r_v^2)$}\\
&\leq2\EX_{\gamma\sim\mathcal N(0,I_k)}
\left[\sum_{uv\in M_\gamma}|\langle\gamma,z_u-z_v\rangle|\right]
&&\text{by \eqref{eq:cut-median}}.
\end{aligned}
\end{equation}

Let $K_g(u,v)=1$ for $uv\in M_g$ and $0$ otherwise.
Independently of $\gamma$, choose a uniform vertex $u\in V$,
let $v$ be its partner in $M_\gamma$, and set $x_0=z_u$, $x_1=z_v$.
Then
\begin{equation}\label{eq:cut-matching-application}
\begin{aligned}
\left(\EX\left[|\langle\gamma,x_1-x_0\rangle|\right]\right)^2
&=O\left(\frac{\log n}{\log\log n}\cdot
\EX\left[\|x_1-x_0\|_2^2\right]\right)
&&\text{by \cref{lemma:transitions}}.
\end{aligned}
\end{equation}
Consequently,
\begin{equation}\label{eq:cut-mean}
\begin{aligned}
\frac1n\left(\sum_{v\in V}r_v\right)^2
&\leq\frac{\pi n}{2}\cdot
\left(\EX\left[|\langle\gamma,x_1-x_0\rangle|\right]\right)^2
&&\text{by \eqref{eq:cut-projection}}\\
&=O\left(\frac{\pi n}{2}\cdot\frac{\log n}{\log\log n}\cdot
\EX\left[\|x_1-x_0\|_2^2\right]\right)
&&\text{by \eqref{eq:cut-matching-application}}\\
&=O\left(\frac{\pi n}{2}\cdot\frac{\log n}{\log\log n}\cdot\frac2n\cdot
\EX_{\gamma\sim\mathcal N(0,I_k)}
\left[\sum_{uv\in M_\gamma}\|z_u-z_v\|_2^2\right]\right)\\
&=O\left(\frac{\log n}{\log\log n}\cdot
\EX_{\gamma\sim\mathcal N(0,I_k)}
\left[\sum_{uv\in M_\gamma}\|z_u-z_v\|_2^2\right]\right).
\end{aligned}
\end{equation}
Since the coin is fair,
\begin{equation}\label{eq:cut-coin}
\EX\left[\sum_{uv\in M}\|z_u-z_v\|_2^2\right]
=\frac12\sum_{uv\in M_r}\|z_u-z_v\|_2^2
+\frac12\EX_{\gamma\sim\mathcal N(0,I_k)}
\left[\sum_{uv\in M_\gamma}\|z_u-z_v\|_2^2\right].
\end{equation}
Consequently,
\[
\begin{aligned}
\sum_{v\in V}\|z_v\|_2^2
&=\sum_{v\in V}r_v^2\\
&=\sum_{v\in V}(r_v-\bar r)^2+\frac1n\left(\sum_{v\in V}r_v\right)^2\\
&\leq\sum_{uv\in M_r}\|z_u-z_v\|_2^2
+O\left(\frac{\log n}{\log\log n}\right)
\EX_{\gamma\sim\mathcal N(0,I_k)}
\left[\sum_{uv\in M_\gamma}\|z_u-z_v\|_2^2\right]
&&\text{by \eqref{eq:cut-norm} and \eqref{eq:cut-mean}}\\
&=O\left(\frac{\log n}{\log\log n}
\EX\left[\sum_{uv\in M}\|z_u-z_v\|_2^2\right]\right)
&&\text{by \eqref{eq:cut-coin}.\qed}
\end{aligned}
\]
\end{proof}

\begin{proofof}{\cref{thm:undirected-cut-matching}}
Take $T=\lceil C\log^2n/\log\log n\rceil$ for a sufficiently large universal $C$.
As in Lau's analysis, in each round the exact vectors satisfy
$\sum_{v\in V}z_v=\mathbf 0$ and
$\sum_{v\in V}\|z_v\|_2^2=\operatorname{Tr}(P_t(I-\mathbf1\mathbf1^T/n))=1$.
Hence
\[
0\leq\operatorname{Tr}(P_tL(M_t))
=\sum_{uv\in M_t}\|z_u-z_v\|_2^2
\leq2\sum_{v\in V}\|z_v\|_2^2=2.
\]
The MMWU regret bound~\cite[Theorem~18.5 and equation~(18.2)]{lau2027algorithmic} gives
\begin{equation}\label{eq:cut-regret}
\lambda_2(H_T)\geq\frac12\sum_{t=1}^T\operatorname{Tr}(P_tL(M_t))-4\log n.
\end{equation}
By \cref{lemma:cut-dimension-reduction}, for a universal $c>0$,
conditional on all preceding rounds,
\begin{equation}\label{eq:cut-round-progress}
\EX[\operatorname{Tr}(P_tL(M_t))]\geq c\frac{\log\log n}{\log n}.
\end{equation}
Let $\tau$ be the first round with
$\sum_{t=1}^{\tau}\operatorname{Tr}(P_tL(M_t))\geq10\log n$,
or $T$ if there is none.
The event $\tau\geq t$ is determined by the preceding rounds, so
\[
\begin{aligned}
c\frac{\log\log n}{\log n}\EX[\tau]
&\leq\EX\left[\sum_{t=1}^{\tau}\operatorname{Tr}(P_tL(M_t))\right]
&&\text{by \eqref{eq:cut-round-progress}}\\
&\leq10\log n+2.
\end{aligned}
\]
Thus $\EX[\tau]\leq T/3$ for sufficiently large $C$.
Markov's inequality gives
$\Pr[\tau\geq T]\leq\EX[\tau]/T\leq1/3$.
Consequently, with probability at least $2/3$,
\begin{equation}\label{eq:cut-total-progress}
\sum_{t=1}^T\operatorname{Tr}(P_tL(M_t))\geq10\log n.
\end{equation}
On this event,
\[
\begin{aligned}
2\phi(H_T)&\geq\lambda_2(H_T)\\
&\geq\frac12\sum_{t=1}^T\operatorname{Tr}(P_tL(M_t))-4\log n
&&\text{by \eqref{eq:cut-regret}}\\
&\geq5\log n-4\log n=\log n
&&\text{by \eqref{eq:cut-total-progress}}.
\end{aligned}
\]

The standard fast MMWU technique
via Johnson-Lindenstrauss
extends this same argument
to a near-linear time algorithm
with slightly worse constant,
with constant probability
(see e.g.~\cite[Section~18.3 and Theorem~9.14]{lau2027algorithmic}).
\end{proofof}

%% file: vertex-expansion.tex
\section{Vertex Expansion}

In this section, we prove
\cref{thm:dense-vtx-cheeger}.
Doing so will involve the following (known) parametrized optimization problem.
\begin{equation}
\label{eq:gamma-d}
\everymath{\displaystyle}
\begin{array}{r c c r c l l}
\gamma^{(d)}(G) & := & \min_{\substack{f:V\to\RR^d\\y:V\to\RR_{\geq0}}}
& \dfrac{\sum_{v\in V}y(v)}{\sum_{x\in V}\|f(x)\|_2^2} \\
&& \textup{subject to} & \sum_{v\in V} f(v) & = & \mathbf 0\\
&& & y(u)+y(v) & \geq & \|f(u)-f(v)\|_2^2 & \forall uv\in E.
\end{array}
\end{equation}
Several important facts about $\gamma$ are already well-known, for different values of $d$.
First and foremost, $\gamma^{(n)}$ is a semidefinite program,
and can be solved (approximately) in polynomial time
(see~\cite{spalding2025reweighted} for some discussion of specific time complexities).
Second is the following result relating $\gamma^{(1)}$ to vertex expansion.

\begin{theorem}[{\cite[Proposition 2.2 and Theorem 2.10]{olesker2022geometric,olesker2024geometric}}]
\label{thm:cheeger-vertex-one-dim}
For a graph $G$,
$$\psi(G)^2\leq O\left(\gamma^{(1)}(G)\right)\leq O\left(\psi(G)\right).$$
\end{theorem}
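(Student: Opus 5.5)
The plan is to prove the two inequalities separately: the upper bound with a direct test function, and the lower bound with a Cheeger-style threshold rounding on the real line.

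\emph{Upper bound $\gamma^{(1)}(G)\leq O(\psi(G))$.} Let $A$ attain $\psi(G)$ with $|A|\leq n/2$. Take $f:=\mathbf 1_A-\frac{|A|}{n}\mathbf 1$, which is centred, and set $y:=\mathbf 1_{\partial A}$. For an edge $uv$, $f(u)\neq f(v)$ happens only when one endpoint is in $A$ and the other is outside $A$. The outside endpoint then lies in $\partial A$, so $y(u)+y(v)\geq1=|f(u)-f(v)|^2$ and the constraints hold. The denominator is $\sum_x f(x)^2=|A|(1-|A|/n)\geq|A|/2$. This gives $\gamma^{(1)}(G)\leq 2|\partial A|/|A|=2\psi(G)$.

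\emph{Lower bound $\psi(G)^2\leq O(\gamma^{(1)}(G))$.} Take an optimal pair $(f,y)$ and normalise $\sum_x f(x)^2=1$, so $\sum_v y(v)=\gamma:=\gamma^{(1)}(G)$. We may assume $\gamma\leq1$, since otherwise the bound follows from $\psi(G)\leq O(1)$ after handling the trivial regime separately.
\begin{enumerate}
\item Let $m$ be a median of $f$. Since $f$ is centred, $\sum_x(f(x)-m)^2\geq1$.
\item By symmetry (replacing $f$ by $-f$ if needed), let $g:=\max(f-m,0)$ satisfy $\sum_x g(x)^2\geq\frac12$. The support of $g$ has at most $n/2$ vertices.
\item Put $\delta(v):=\sqrt{y(v)}$. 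The edge constraint implies $|f(u)-f(v)|\leq\delta(u)+\delta(v)$, and the same holds for $g$ because truncation is $1$-Lipschitz.
\item Assign each vertex the interval $I_v:=[g(v)-\delta(v),\,g(v)+\delta(v)]$. Adjacent vertices have overlapping intervals.
\end{enumerate}

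For $s>0$ define $S_s:=\{v: g(v)-\delta(v)>s\}$, which has size at most $n/2$. If $v\notin S_s$ is adjacent to $u\in S_s$, then $g(v)+\delta(v)\geq g(u)-\delta(u)>s\geq g(v)-\delta(v)$, so $s\in I_v$. Hence $\partial S_s\subseteq\{v: s\in I_v\}$. This interval trick is what avoids any dependence on the degree: each boundary vertex is charged only to its own $y$-value, not to those of its neighbours.

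Now integrate against $d(s^2)$ over $s>0$. The numerator satisfies
\[
\int|\partial S_s|\,d(s^2)\leq\sum_v\bigl((g(v)+\delta(v))^2-(g(v)-\delta(v))_+^2\bigr)\leq\sum_v\bigl(4g(v)\delta(v)+\delta(v)^2\bigr)\leq4\sqrt{\gamma}+\gamma,
\]
using Cauchy--Schwarz for the cross term. The denominator satisfies
\[
\int|S_s|\,d(s^2)=\sum_v(g(v)-\delta(v))_+^2\geq\sum_v g(v)^2-2\sum_v g(v)\delta(v)\geq\tfrac12-2\sqrt{\gamma}.
\]
Some threshold $s$ therefore has $\psi(S_s)$ at most the ratio of the two integrals. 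When $\gamma$ is below a small absolute constant this ratio is $O(\sqrt\gamma)$, giving $\psi(G)^2\leq O(\gamma)$. When $\gamma$ is at least that constant, the bound is immediate whenever $\psi(G)=O(1)$; if instead $\psi(G)$ is large, a separate argument is needed, namely that large vertex expansion forces $\gamma^{(1)}(G)$ to be large as well.

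\emph{Main obstacle.} The delicate step is the denominator lower bound: the truncation of $g-\delta$ can lose mass when $\gamma$ is not small. I would handle this by rescaling the intervals, using $S_s:=\{v: g(v)-\delta(v)/\epsilon>s\}$ with a suitable constant $\epsilon$, or equivalently by splitting into the cases $\gamma$ small and $\gamma$ bounded below by a constant.
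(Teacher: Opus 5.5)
Your proof is correct once two small repairs are made. The paper gives no proof of its own here: it cites Olesker-Taylor and Zanetti, and uses only the consequence that a sweep over any feasible one-dimensional pair $(f,y)$ yields, in polynomial time, a set $S$ with $\psi(S)^2=O(\sum_v y(v)/\sum_v f(v)^2)$.

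Your shifted-threshold sweep $S_s=\{v: g(v)-\sqrt{y(v)}>s\}$ proves exactly that property. It uses only feasibility, not optimality, and there are at most $n$ distinct sets $S_s$ to try, so it supplies what the paper needs in its vertex-expansion section. The shift by $\delta(v)$ is the key point, since it charges each boundary vertex to its own $y$-value and so avoids any degree dependence. Your upper bound $\gamma^{(1)}(G)\leq 2\psi(G)$ is correct as written.

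\emph{The loose end you flag is not real.} For every graph on $n\geq2$ vertices, $\psi(G)\leq2$. To see this, take any $A$ with $|A|=\lfloor n/2\rfloor$; then $|\partial A|\leq n-|A|\leq |A|+1\leq 2|A|$. So when $\gamma\geq c_0$ for your small absolute constant $c_0$, you get $\psi(G)^2\leq 4\leq (4/c_0)\gamma$ directly. There is no regime in which $\psi(G)$ is large. Consequently the ``separate argument'' is unnecessary, and so is the rescaling by $\epsilon$ in your ``main obstacle'' paragraph: the truncation may lose mass only when $\gamma$ is not small, and that case is covered by the trivial bound. For the algorithmic version, output any set of size $\lfloor n/2\rfloor$ in that regime.

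\emph{There is a normalisation slip.} After setting $\sum_x f(x)^2=1$ you have $\sum_x (f(x)-m)^2=1+nm^2$. So $Q:=\sum_x g(x)^2$ can exceed $1$; for example, take $f\propto(-1,-1,-1,3)$, where $Q=4/3$. Hence $\sum_v g(v)\delta(v)\leq\sqrt{\gamma}$ is not justified as written.

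Keep $Q$ explicit instead:
\begin{itemize}
\item the numerator is at most $4\sqrt{Q\gamma}+\gamma$;
\item the denominator is at least $Q-2\sqrt{Q\gamma}$;
\item so the ratio depends only on $\gamma/Q\leq 2\gamma$.
\end{itemize}
For $\gamma\leq 1/32$ the ratio is at most $10\sqrt{2\gamma}$, which gives $\psi(G)^2\leq 200\gamma$. Combined with the trivial bound above, this gives $\psi(G)^2\leq 200\,\gamma^{(1)}(G)$ for every graph.
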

The sweep construction in the cited proof finds, from any feasible
one-dimensional pair $(f,y)$, a set $S$ in polynomial time with
$\psi(S)^2=O\left(\sum_{v\in V}y(v)/\sum_{v\in V}f(v)^2\right)$.
Note that $\gamma^{(n)}(G)\leq\gamma^{(1)}(G)$
by definition (it is a relaxation).
Hence,
to prove \cref{thm:dense-vtx-cheeger},
it suffices to prove the following result.

\begin{proposition}
\label{thm:degree-dim-red}
For a graph $G$ with $n$ vertices,
$$\gamma^{(1)}(G)\in O\left(\gamma^{(n)}(G)\cdot\frac{\log n}{\log\log n}\right).$$
A randomized polynomial time algorithm realizes this bound
with probability at least $2/3$.
\end{proposition}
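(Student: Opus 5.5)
Starting from an optimal (or near-optimal) solution $(f,y)$ of the semidefinite program $\gamma^{(n)}(G)$, I would produce a one-dimensional feasible pair whose objective is worse by at most $O(\log n/\log\log n)$. The reduction mirrors \cref{lemma:cut-dimension-reduction}. Write $z_v:=f(v)\in\RR^n$, so $\sum_v z_v=\mathbf 0$. With a fair coin, take either the radial map $h(v):=\|z_v\|_2$ (heads) or the Gaussian projection $h(v):=\langle\gamma,z_v\rangle$ with $\gamma\sim\mathcal N(0,I_n)$ (tails), and recentre $h$ to have mean zero. Set the new $y$ to be a constant multiple of the old one. Feasibility holds in both branches because $|h(u)-h(v)|\le\|z_u-z_v\|_2$ in the radial case. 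In the Gaussian case it fails deterministically, so there I would instead set $\tilde y(v):=\max_{u\sim v}\langle\gamma,z_u-z_v\rangle^2$. Using a single scalar $y'$ for both branches seems cleaner.

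Concretely, I define for each candidate $h$ the minimal feasible $y_h(v):=\max_{u\sim v}(h(u)-h(v))^2$, or any $y$ with $y_h(u)+y_h(v)\ge (h(u)-h(v))^2$. Then:
\begin{itemize}
\item In the radial branch, $y_h\le$ the old $y$ suffices, since $(r_u-r_v)^2\le\|z_u-z_v\|^2\le y(u)+y(v)$.
\item In the Gaussian branch I would not bound $y_h$ pointwise. Instead I would keep the old $y$ and accept the loss, since the numerator is the only thing to control; the projection can only hurt via a $\log n$ factor from taking the max of $n$ Gaussians.
\end{itemize}
To avoid that loss, the better plan is to keep the numerator $\sum y$ fixed and let only the denominator shrink. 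Using the radial branch alone, the numerator is unchanged and the denominator becomes $\sum_v(r_v-\bar r)^2$. So the whole difficulty is the case where $\sum_v(r_v-\bar r)^2$ is small compared with $\sum r_v^2$, i.e.\ the vectors have nearly equal norms. That is the well-spread regime where \cref{lemma:transitions} enters.

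For that regime, I would compare the one-dimensional map $h(v)=\langle\gamma,z_v\rangle$ against the same $y$ scaled by a factor $\lambda$. Feasibility requires $\lambda(y(u)+y(v))\ge\langle\gamma,z_u-z_v\rangle^2$ for every edge, which generally forces $\lambda\gtrsim\log n$. This is too lossy, so a genuinely different one-dimensional map is needed. Mimicking \cref{thm:embed}, I would take the non-expansive map $h$ given by Kantorovich duality (\cref{lemma:matching-duality}) for the optimal matching $A_\gamma$ between the extreme quantiles of the projection. Non-expansiveness gives $(h(u)-h(v))^2\le\|z_u-z_v\|^2\le y(u)+y(v)$, so the old $y$ remains feasible and the numerator is unchanged. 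The denominator satisfies $\sum_v(h(v)-\bar h)^2\ge\frac1n(\sum_v|h(v)-\bar h|)^2$. By the computation in the proof of \cref{thm:embed}, $\sum_v|h(v)-h(z)|\ge|A_\gamma|$, and \cref{cor:separation} (or its squared analogue via \cref{lemma:transitions}, exactly as in \eqref{eq:cut-mean}) gives $\EX|A_\gamma|^2/n\gtrsim\frac{\log\log n}{\log n}\sum_v\|z_v\|^2$ in the well-spread case. In the non-well-spread case, where some ball holds a constant fraction of the points, the map $v\mapsto\|z_v-z_{x^*}\|$ handles it through the analogue of \cref{lemma:large-core}, again non-expansive.

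The main obstacle is converting the $\ell_1$-type guarantee of \cref{cor:separation} into the $\ell_2$ quantity $\sum\|z_v\|^2$ in the denominator. Since both radial and matching-based maps are non-expansive, I would combine them through the radial/mean decomposition used in \cref{lemma:cut-dimension-reduction}. The radial map controls the variance $\sum(r_v-\bar r)^2$, and the dual matching map controls $n\bar r^2$ via \eqref{eq:cut-mean}. Whichever is larger gives a non-expansive scalar map capturing an $\Omega(\log\log n/\log n)$ fraction of $\sum\|z_v\|^2$ in expectation.

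For the algorithm, solve the SDP approximately, try both maps for several independent samples of $\gamma$ (computing each matching and its dual in polynomial time), keep the best ratio, and apply Markov's inequality to get success probability $2/3$. Finally, apply the sweep from \cref{thm:cheeger-vertex-one-dim}.
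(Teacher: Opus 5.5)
Your eventual strategy (split $\sum_v\|z_v\|^2=\sum_v(r_v-\bar r)^2+n\bar r^2$, use the centered radial map for the first term and a low-average-distortion map for the second) is viable and differs from the paper. However, the step you yourself flag as the main obstacle is not justified as written.

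First, the claim that in the well-spread case $\EX|A_\gamma|^2/n=\Omega\bigl(\frac{\log\log n}{\log n}\sum_v\|z_v\|^2\bigr)$ is false. For every $g$, $|A_g|\le\min_{m\in X}\sum_{x}\|x-m\|\le n\Delta$ with $\Delta=\frac1{n^2}\sum_{u,v}\|z_u-z_v\|$, so $|A_g|^2/n\le n\Delta^2$. A unit simplex on $n-1$ points plus one point at distance $n$ from all of them is well-spread with $\Delta=O(1)$, yet $\sum_v\|z_v\|^2=\Theta(n^2)$.

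Second, \eqref{eq:cut-mean} does not show that the dual matching map ``controls $n\bar r^2$''. It bounds $n\bar r^2$ by the expected squared length of a Gaussian-bisection matching, which is not a statement about the variance of any non-expansive scalar map. The missing link is between $\bar r$ and the $\ell_1$ scale $\Delta$ in which \cref{cor:separation} and \cref{lemma:large-core} are phrased, namely $\bar r\le\Delta$. This follows from centering: $\frac1n\sum_u\|z_u-z_v\|\ge\bigl\|\frac1n\sum_u(z_u-z_v)\bigr\|=\|z_v\|$.

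With this inequality your argument closes, and for the bound itself you can use \cref{thm:embed} as a black box. If $n\bar r^2\ge\frac12\sum_v\|z_v\|^2$, a non-expansive $h$ of average distortion $D=O(\sqrt{\log n/\log\log n})$ satisfies $\sum_v(h(v)-\bar h)^2=\frac1{2n}\sum_{u,v}(h(u)-h(v))^2\ge\frac1{2n^3}\bigl(\sum_{u,v}|h(u)-h(v)|\bigr)^2\ge\frac{n\Delta^2}{2D^2}\ge\frac{1}{4D^2}\sum_v\|z_v\|^2$. Otherwise the centered radial map loses only a factor $2$. Both maps are non-expansive, so the original $y$ stays feasible.

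This is a self-contained proof of exactly the special case of \cref{lemma:sqr-avd} that the paper cites from Naor after applying \cref{thm:embed}. The paper's route is shorter on the page; yours is elementary and makes explicit which map to output.

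For the probability-$2/3$ claim, note that \cref{cor:separation} together with $|A_\gamma|\le n\Delta$ only guarantees per-sample success probability $\Omega(\sqrt{\log\log n/\log n})$. So you should take $O(\sqrt{\log n/\log\log n})$ independent samples of $\gamma$, not a constant number.
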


To do this, we will require a slightly different kind of metric embedding result
than what
\cref{thm:embed} gives.
Specifically,
\cref{thm:embed}
bounds the average distortion,
while we require a bound on a quantity we call the \defn{average squared distortion},
which (for a non-expansive metric embedding $f$ into the line of a metric $(X,d)$)
is defined as
\[
\avd^2(f)
 :=\frac{\sum_{x,y\in X}d(x,y)^2}
         {\sum_{x,y\in X}|f(x)-f(y)|^2}.
\]
This quantity has seen use in prior works for different purposes~\cite{biswal2008eigenvalue,biswal2010eigenvalue,spalding2025reweighted}.
It turns out that our existing bound on the average distortion already gives a nice bound on the average squared distortion, through a known relationship.

\begin{lemma}[{\cite[special case from proof of equation~(7.33)]{Naor2014comparison}}]\label{lem:average-powers}
\label{lemma:sqr-avd}
Let $(X,d)$ be an $n$-point metric space, and let $h:X\to\RR$
be a non-expansive embedding of average distortion $D$.
Then there exists a non-expansive $f:X\to\RR$ satisfying
\[
\sum_{x,y\in X}d(x,y)^2
\leq CD^2\sum_{x,y\in X}|f(x)-f(y)|^2
\]
for a universal constant $C$.
Given $h$ and the pairwise distances, $f$ can be computed in polynomial time.
\end{lemma}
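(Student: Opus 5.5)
The plan is to compare everything to the distances from a single well-chosen center $z\in X$. The only things to prove are an upper bound on $\sum d^2$ in terms of the radial profile around $z$, and the choice of one of two candidate embeddings: either $h$ itself, or the radial map $x\mapsto d(x,z)$. Both candidates are non-expansive: $h$ by hypothesis, and the radial map by the triangle inequality. This also gives the algorithm: compute both candidates and output whichever has larger $\sum|f(x)-f(y)|^2$, which is polynomial time given the distance matrix.

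\textbf{Choice of center and basic inequalities.} Choose $z$ minimizing $\sum_{x}d(x,z)$. Write $R_x:=d(x,z)$, let $m:=\frac1n\sum_x R_x$, and let $\sigma^2:=\frac1n\sum_x(R_x-m)^2$.
\begin{itemize}
\item Since $z$ minimizes the row sums, $\sum_{x,y}d(x,y)=\sum_y\sum_x d(x,y)\geq n\sum_x R_x=n^2m$.
\item From $d(x,y)^2\leq 2R_x^2+2R_y^2$ we get $\sum_{x,y}d(x,y)^2\leq 4n\sum_xR_x^2=4n^2(m^2+\sigma^2)$.
\item For $f_z(x):=R_x$, the variance identity gives $\sum_{x,y}|f_z(x)-f_z(y)|^2=2n^2\sigma^2$.
\item By Cauchy--Schwarz and the definition of $D$,
\[
\sum_{x,y}|h(x)-h(y)|^2\geq\frac1{n^2}\Bigl(\sum_{x,y}|h(x)-h(y)|\Bigr)^2\geq\frac{1}{n^2D^2}\Bigl(\sum_{x,y}d(x,y)\Bigr)^2\geq\frac{n^2m^2}{D^2}.
\]
\end{itemize}

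\textbf{Two cases.} If $\sigma^2\geq m^2$, take $f=f_z$. Then $\sum d^2\leq 8n^2\sigma^2=4\sum|f_z(x)-f_z(y)|^2$, which is at most $4D^2\sum|f_z(x)-f_z(y)|^2$ because $D\geq1$ for any non-expansive map. If $\sigma^2<m^2$, take $f=h$. Then $\sum d^2\leq 8n^2m^2\leq 8D^2\sum|h(x)-h(y)|^2$. In either case the bound holds with $C=8$.

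\textbf{Main obstacle.} The delicate point is that $\sum d^2$ can be far larger than $(\sum d)^2/n^2$ when there are outliers. In that situation $h$ alone cannot work, since Cauchy--Schwarz only converts the first-moment guarantee of $h$ into a bound comparable to $n^2m^2$. The case split handles this: whenever outliers dominate $\sum d^2$, they show up as a large radial variance $\sigma^2$, and the radial map $f_z$ captures that variance directly. The remaining step that needs care is choosing $z$ as the row-sum minimizer, so that $m$ is controlled by the average distance.
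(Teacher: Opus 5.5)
Your proof is correct and self-contained. The paper gives no argument for this lemma; it only imports it as a special case of the proof of (7.33) in Naor's work, so the comparison is between your direct argument and a citation.

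All of your steps check out:
\begin{itemize}
\item the bound $\sum_{x,y}d(x,y)\ge n^2m$ from taking $z$ to be a row-sum minimizer;
\item the estimate $\sum_{x,y}d(x,y)^2\le 4n\sum_xR_x^2$;
\item the variance identity for $f_z$;
\item the Cauchy--Schwarz lower bound $\sum_{x,y}|h(x)-h(y)|^2\ge n^2m^2/D^2$;
\item the case analysis, which uses $D\ge 1$ in the first case.
\end{itemize}
Your selection rule is also valid: output whichever candidate has the larger $\sum|f(x)-f(y)|^2$. At least one candidate satisfies the inequality, so the larger one does too. This makes the paper's polynomial-time claim explicit, in $O(n^2)$ time given the distance matrix.

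Your route isolates the one mechanism needed for exponent $2$ and target $\RR$. It is a dichotomy between two maps:
\begin{itemize}
\item the given map $h$, which controls the part of $\sum d^2$ comparable to $(\sum d)^2/n^2$;
\item the radial map $d(\cdot,z)$, which captures the spread of distances from the center, i.e.\ the outliers.
\end{itemize}

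You can drop the case split entirely. The identity $4n\sum_xR_x^2=4n^2m^2+2\sum_{x,y}(R_x-R_y)^2$, together with $4n^2m^2\le 4D^2\sum_{x,y}|h(x)-h(y)|^2$, gives
\[
\sum_{x,y}d(x,y)^2\le 4D^2\sum_{x,y}|h(x)-h(y)|^2+2\sum_{x,y}|f_z(x)-f_z(y)|^2,
\]
so the better candidate achieves $C=6$.

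The citation buys brevity and access to the more general statement in Naor's work. The cost is that the reader must extract the special case and the algorithm themselves. Your argument makes the lemma verifiable in a few lines with an explicit constant.
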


\begin{proofof}{\cref{thm:degree-dim-red}}
Let $f,y$ be an optimal pair for $\gamma^{(n)}(G)$.
Embed the points given by $f$ into the line using
\cref{thm:embed}
and
\cref{lemma:sqr-avd}
to get a non-expansive embedding $f':V\to\RR$
into the line
with average squared distortion at most $O\left(\frac{\log n}{\log\log n}\right)$.
Assume without loss of generality (by shifting) that
$\sum_{v\in V}f'(v)=0$.
Consequently, by non-expansitivity,
$f',y$ must be a feasible solution to $\gamma^{(1)}(G)$,
and it must have objective value
at most
\[
\begin{aligned}
\gamma^{(1)}(G)
&\leq\frac{\sum_{v\in V}y(v)}{\sum_{v\in V}f'(v)^2}\\
&=\frac{2n\sum_{v\in V}y(v)}{\sum_{u,v\in V}|f'(u)-f'(v)|^2}\\
&\leq O\left(\frac{\log n}{\log\log n}\cdot
\frac{2n\sum_{v\in V}y(v)}{\sum_{u,v\in V}\|f(u)-f(v)\|_2^2}\right)\\
&\leq O\left(\frac{\log n}{\log\log n}\cdot
\frac{\sum_{v\in V}y(v)}{\sum_{v\in V}\|f(v)\|_2^2}\right)\\
&=O\left(\frac{\log n}{\log\log n}\gamma^{(n)}(G)\right).
\end{aligned}
\]
\aftermath
\end{proofof}